\begin{document}

\newcommand{\jc}[1]{{\color{red}\{JC: #1\}}}
\newcommand{\jcedit}[1]{{\color{red}{#1}}}
\newcommand{\td}[1]{{\color{red}\{TODO: #1\}}}
\newcommand{\greg}[1]{{\color{red}\{GREG: #1\}}}
\newcommand{\cagatay}[1]{{\color{orange}{[[\c{C}a\u{g}atay]] #1}}}

\title{Data Extraction via Semantic Regular Expression Synthesis}         


\author{Qiaochu Chen}
\authornote{Work started and partially completed at Sigma Computing.}
\orcid{0000-0003-4680-5157}
\affiliation{
  \institution{University of Texas at Austin} 
  \city{Austin}
  \state{Texas}
  \country{USA}
}
\email{qchen@cs.utexas.edu}

\author{Arko Banerjee}
\orcid{0000-0001-7688-6133}
\affiliation{
  \institution{University of Texas at Austin}
  \city{Austin}
  \state{Texas}
  \country{USA} 
}
\email{arko.banerjee@utexas.edu}

\author{Çağatay Demiralp}
\authornotemark[1]
\orcid{0000-0002-5933-6620}
\affiliation{
  \institution{MIT CSAIL}
  \city{Cambridge}
  \state{Massachusetts}
  \country{USA} 
}
\email{cagatay@csail.mit.edu}

\author{Greg Durrett}
\orcid{0000-0002-7061-7298}
\affiliation{
  \institution{University of Texas at Austin} 
  \city{Austin}
  \state{Texas}
  \country{USA}
}
\email{gdurrett@cs.utexas.edu}

\author{Işil Dillig}
\orcid{0000-0001-8006-1230}
\affiliation{
  \institution{University of Texas at Austin}
  \city{Austin}
  \state{Texas}
  \country{USA}
}
\email{isil@cs.utexas.edu}

\begin{abstract}
Many data extraction tasks of practical relevance require not only syntactic pattern matching but also semantic reasoning about the content of the underlying text. While regular expressions are very well suited for tasks that require only syntactic pattern matching, they fall short for data extraction tasks that involve both a syntactic and semantic component. To address this issue, we introduce \emph{semantic regexes}, a generalization of regular expressions that facilitates combined syntactic and semantic reasoning about textual data. We also propose a novel learning algorithm that can synthesize semantic regexes from a small number of positive and negative examples. Our proposed learning algorithm uses a combination of neural sketch generation and compositional type-directed synthesis for fast and effective generalization from a small number of examples.   We have implemented these ideas in a new tool called \toolname and evaluated it on  representative data extraction tasks involving several textual datasets. Our evaluation shows that semantic regexes can better support complex  data extraction tasks than standard regular expressions and that our learning algorithm significantly outperforms existing tools, including  state-of-the-art neural networks and  program synthesis tools.

\end{abstract}




\maketitle

\section{Introduction}\label{sec:intro}

Regular expressions (or \emph{regexes}) are a convenient and versatile mechanism for extracting information from textual data. Because of their  wide applicability, many programming languages provide  built-in support for regular expressions, allowing developers to perform textual pattern matching. Further, because regular expressions  have numerous applications in user-facing applications like spreadsheets, recent years have seen an explosion in the number of new techniques for learning regular expressions from examples and/or natural language~\cite{alpharegex,regel}.

Despite their general practicality, regexes are mainly applicable in settings where the desired data extraction task is  \emph{purely syntactic} in nature. For example,  regexes are  very well-suited to tasks like describing phone numbers and dates  because such concepts can be described in terms of a specific syntactic format (e.g., \textsf{+D-DDD-DDD-DDDD} or \textsf{DD/DD/DDDD}). However, many data extraction tasks of practical relevance are \emph{not} so easy to describe using a \emph{purely} syntactic pattern. As a simple example, consider the task of  extracting \emph{business} emails from a text file. Any email address must follow a certain syntactic format, but this task also involves a \emph{semantic} component in that it requires determining whether some text in the email describes a business entity. As another example, consider the problem of extracting zip codes that fall within a certain range. In addition to checking whether a string syntactically matches a zip code pattern (DDDDD or DDDDD-DDDD), it requires interpreting part of the string as a number and then performing a semantic range check, which is difficult to do using regexes.

Based on this observation, this paper proposes the concept of a \emph{semantic regex} as a mechanism for combining the strengths of syntactic pattern matching with semantic reasoning. Our proposed semantic regexes generalize standard regular expressions in that they provide a \emph{semantic pattern matching construct} which accepts strings that (a) belong to a category $\tau$ (e.g., business, location, person) and (b) satisfy a predicate $\phi$ when interpreted as an instance of type $\tau$. For example, this construct can be used to match strings that (a) correspond to a \textsf{City} (type $\tau$), and (b) further satisfy some additional criterion, such as being in the United States or in the state of California (predicate $\phi$). Under the hood,  semantic pattern matching employs large language models like GPT-3~\cite{gpt3, palm} to test membership in some category $\tau$ but further allows refining the query result using a logical predicate $\phi$. In this sense, one can view our semantic regexes as deciding  membership in a refinement type and then combining the matching strings using  standard regex operators.

Beyond proposing the notion of semantic regexes, another key contribution of this paper is a new synthesis algorithm for learning semantic regexes from positive and negative examples. The learning problem in this context is more challenging than traditional regex synthesis because semantic regexes are much more expressive than standard regexes. As a result, the hypothesis space in this setting is very large, which has two important consequences:

\begin{itemize}[leftmargin=*]
\item First, the semantic regex learning problem cannot be solved using a purely search-based approach due to the sheer size of the search space. In fact, the search space is theoretically not even bounded because our semantic regex language does not restrict the types $\tau$ to a pre-defined vocabulary.
\item  Second, due to the extremely large hypothesis space, there are typically many semantic regexes consistent with a small number of examples. Hence, to find the \emph{intended} semantic regex, our learning algorithm must have a strong inductive bias towards user intent. 
\end{itemize}

The synthesis technique proposed in this paper surmounts these challenges using a novel combination of three key ideas:

\begin{enumerate}[leftmargin=*]
    \item {\bf Neural sketch generation:} Our learning algorithm uses a large language model (GPT-3) to generate a sketch of the desired semantic regex. Our key observation is that LLMs are well-suited to this task because they are effective at identifying semantic commonalities between the positive examples and {inferring appropriate types to be used within the semantic pattern matching constructs.}
\item {\bf Compositional synthesis:} Our learning algorithm decomposes the synthesis task into multiple simpler sub-problems. Because the holes (i.e., unknowns) in the generated sketches are \emph{typed}, the synthesis technique lends itself to a compositional solution, where we can synthesize each hole largely (though not entirely) independently. 
\item {\bf Type-directed search:} The presence of type information in the sketches makes it possible to fill each hole in a type-directed way. Specifically, we utilize a type system with subtype polymorphism to infer the space of \emph{valid} completions of a hole.
\end{enumerate}

 \begin{figure}
 \vspace{-0.5cm}
     \includegraphics[scale=0.22, trim=40 80 50 650, clip]{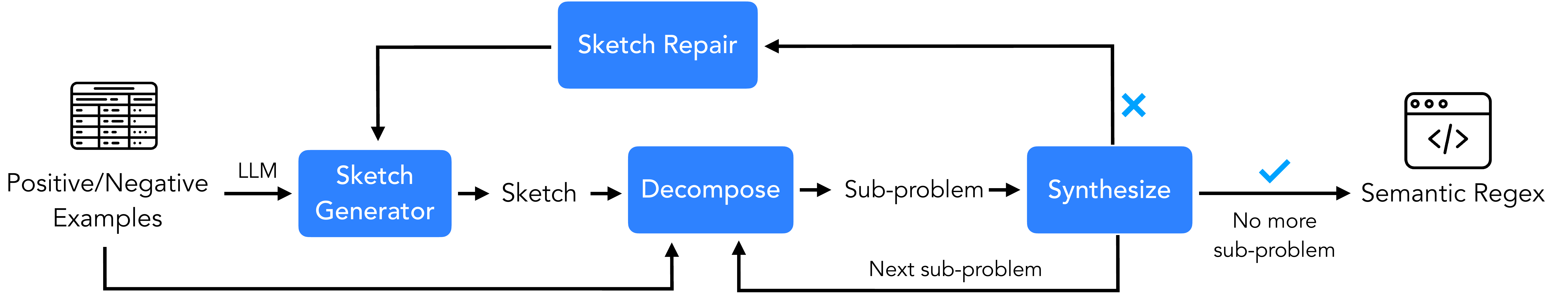}
     \vspace{-0.1in}
     \caption{Schematic overview of our approach.}
     \label{fig:workflow}
     \vspace{-0.5cm}
 \end{figure}

 Figure~\ref{fig:workflow} shows the workflow of our proposed learning approach, which first utilizes the provided examples to generate a semantic regex \emph{sketch} using GPT-3. In the next step, our approach searches for completions of the sketch by (a) decomposing the overall problem into several subproblems and (b) using type-directed synthesis to solve each subproblem. If the sketch has a valid completion, the resulting semantic regex is returned to the user. Otherwise, our approach analyzes the root cause of failure and uses this information to query the language model for a more accurate sketch. 

 We have implemented the proposed technique in a tool called \toolname  and evaluated it on  information extraction tasks involving several different datasets. Our evaluation shows that these data extraction tasks can be successfully automated using our proposed semantic regexes and that our learning algorithm is quite effective for automating the desired data extraction task. In particular, our approach achieves an average $F_1$ score of $0.87$ on the test data, while prior data extraction techniques achieve a maximum $F_1$ score of $0.65$.

To summarize, this paper makes the following contributions:

\begin{itemize}[leftmargin=*]
    \item We propose \emph{semantic regular expressions} to combine the flexibility of syntactic pattern matching with semantic queries involving types and logical predicates.
    \item We describe a new learning technique for synthesizing semantic regexes from positive and negative examples. Our approach combines the power of large language models with type-directed synthesis for effective automation of data extraction tasks.
    \item We evaluate our tool, \toolname, on representative data extraction tasks and show that semantic regexes are useful for these tasks and that our learning approach outperforms other data extraction techniques in terms of  average $F_1$ score. 
\end{itemize}

\section{Overview}\label{sec:overview}

\begin{figure}[!t]
    \centering
    \footnotesize
    \begin{tabular}{ccc}
        \toprule
        Object ID & Department & Artist Bio \\
        \midrule
        1 & Decorative Art & Heinrich Reinhold, Germany, 1740-1789  \\
        2 & Contemporary Art & Cindy Sherman, United States, 1954-present \\
        3 & Medieval Art & Sandro Botticelli, Italy, 1470-1561 \\
        4 & Modern Art & Max Ernst, Germany, 1891-1976 \\
        5 & Medieval Art & Niclaus Gerhaert von Leyden, North Netherlands, 1462-1473 \\
        $\cdots$ & $\cdots$ & $\cdots$ \\
        \bottomrule
    \end{tabular}
    \vspace{-0.2cm}
    \caption{Dataset about pieces of art exhibited in a museum.}
    \label{fig:dataset_example}
    \vspace{-0.5cm}
\end{figure}

In this section, we illustrate our technique using the motivating example  shown in Figure~\ref{fig:dataset_example}, which contains information about artworks exhibited at a museum. Given this dataset, suppose that a user wants to extract all European artists who were born before the 20th century and whose  name contains Thomas. This data extraction task is challenging because it requires both syntactic and semantic reasoning:

\begin{itemize}[leftmargin=*]
    \item {\bf Syntax:} In order to retrieve the desired information from this dataset, we first need to perform pattern matching over the syntax of the ``Artist Bio'' column. In particular, because this column contains information of the form ``Name, Country, Birth Year - Death Year'', we first need to \emph{syntactically} parse the input string into its four constituent fields and check whether the first field (corresponding to the artist name) contains ``Thomas''.
    \item {\bf Semantics:} After performing syntactic pattern matching, we then need to perform semantic reasoning about the contents of each row to understand whether (a) the first field describes a name, (b) the artist's nationality is European and (c) they were born before the 20th century.
\end{itemize}

\subsection{Semantic Regexes} 
Our proposed \emph{semantic regex} concept is a natural fit for the data extraction task illustrated in this example. Semantic regexes combine the convenience of regexes for syntactic pattern matching with the power of semantic reasoning about  data types. In addition to supporting the standard regex operators (concatenation, disjunction, Kleene star), semantic regexes provide the following  \emph{semantic} pattern matching construct, written using a refinement-type-like notation:
\[
\footnotesize
\{ v: \tau \ | \ \phi \}
\]
This construct matches any string that  is semantically of type $\tau$ and that further satisfies the (optional) logical qualifier $\phi$. For instance, going back to our example, recall that we need to pattern match strings that  correspond to a European country. This can be  expressed using the semantic regex $
\{ v: \textsf{Country} \ | \ v \in \textsf{Europe} \}
$, which, for example, matches the strings ``France'', ``Britain'' and ``North Netherlands'', but fails on the strings ``United States'', ``Korea'' etc. Similarly, we can express the desired constraint on the artists' birth year using the following semantic regex:
\[
\footnotesize
\{ v: \textsf{Year} \ | \ v < 1900 \}
\]
which matches strings that (a) correspond to a year and (b) whose value is less than or equal to 1899. Putting all of this together, our desired data extraction task can be accomplished using the following overall semantic regex:

\vspace{-0.5cm}
\footnotesize
\begin{align*}
    & \regex_1 \cdot  ``, \ " \cdot \regex_2  \cdot ``, \ " \cdot \regex_3 \cdot ``-" \cdot \regex_4 \\
    \texttt{where } & \regex_1 = \{ v: \textsf{Name} \} \ \cap \ {\tt Contain}(``{\tt Thomas}") \\
                    & \regex_2 = \{ v: \textsf{Country} \ | \ v \in \textsf{Europe}\} \\
                    & \regex_3 = \{ v: \textsf{Year} \ | \ v< 1900\} \\
                    & \regex_4 = \{ v: \textsf{Year} \}
\end{align*}
\normalsize
\vspace{-0.5cm}

In other words, this semantic regex matches all strings of the form ``X, Y, Z-W" where $X$ is a name containing Thomas, $Y$ is a European country, $Z$ is a year before 1900, and $W$ is any year. 

\subsection{Synthesizing Semantic Regexes} While semantic regexes provide a useful mechanism for information extraction, they can nonetheless  be non-trivial for end-users to construct. Motivated by this problem, another key contribution of this paper is a new technique for synthesizing semantic regexes from a small number of positive and negative examples. We now illustrate how our technique can be used to automate the data extraction task for our running example. 
Suppose that the user describes the target data extraction task using the following positive and negative examples:

\begin{center}
\footnotesize
\begin{tabular}{cc}
    \toprule
     {\bf Positive Examples} & {\bf Negative Examples}  \\
     \midrule
     John Thomas Young Gilroy, Britain, 1898-1985 &  Alma Thomas, United States, 1891-1978 \\ 
     Thomas Hudson, Britain, 1701-1779 & Sandro Botticelli, Italy, 1470-1561 \\
     Thomas Couture, France, 1815-1879 & Thomas Nölle, Germany, 1948-2020 \\
     \bottomrule
\end{tabular} 
\end{center}

Here, the positive examples correspond to the artist biographies that should be extracted, while the negative examples are those that should be ignored. In particular, the first negative example does not conform to the ``European country'' restriction; the second negative example does not contain ``Thomas'' in the artist's name;  and the third one fails the criteria ``born before the 20th century''.  We will now describe how our approach synthesizes the target semantic regex given only these examples.

 At the heart of our learning approach lies the notion of a \emph{typed sketch}, which captures the general {syntactic} structure of the target semantic regex. In addition, the holes (i.e., unknowns) in the sketch are annotated with types capturing commonalities in the positive examples.  
Returning to our running example, our synthesis approach generates an initial candidate sketch by querying a large language model (GPT-3) with user-provided positive examples. Suppose that GPT-3 returns the following sketch:
\[
\footnotesize
\thole{Name}\cdot ``, \ " \cdot\thole{Country}\cdot ``, \ " \cdot\thole{Year}
\]
Here, the symbol $\thole{}$ denotes an unknown expression, and the notation $\thole{t}$ indicates that any string matched by $\thole{}$ should be a subtype of ${\sf t}$.

Starting with the GPT-3-synthesized sketch, our method decomposes the synthesis problem into multiple sub-problems, one for each hole in the sketch, and  performs a type-directed search to complete each hole. For this example,  our synthesis method infers the following positive examples for each hole:

\begin{center}
    \footnotesize
    \begin{tabular}{cccc}
        \toprule
        $\thole{Name}$ & $\thole{Country}$ & $\thole{Year}$  \\
         \midrule
         John Thomas Young Gilroy & Britain & 1898-1985 \\ 
         Thomas Hudson & Britain & 1701-1779 \\
         Thomas Couture & France & 1815-1879 \\
         \bottomrule
    \end{tabular} 
\end{center}

Note that it is not possible to propagate  negative examples for individual holes, as it suffices for the synthesized regex for \emph{one}  hole to reject its corresponding string, but we do not a priori know which one.  In particular, for this  example, it would \emph{not} be accurate to deduce that ``Alma Thomas'', ``Sandro Botticelli'', and ``Thomas Nölle'' as  negative examples for the first hole.

Given this decomposition, our approach tries to synthesize a regex $\regex_i$ for each hole $\{\hole: \type_i\}_i$ such that (a) the type of $\regex_i$ is a subtype of $\type_i$ and (b) $\regex_i$ matches all of its corresponding positive examples. For this example, our synthesis algorithm can immediately deduce that the sketch is incorrect since no subtype of $\mathsf{Year}$ can match the corresponding positive examples for the third hole.

 To repair the sketch, our learning algorithm localizes parts of the sketch for which synthesis failed (in this case, \textsf{Year}) and synthesizes a different sketch for the failing part. In the next iteration, suppose that we consider the following correct sketch:
\[
\footnotesize
\thole{Name}\cdot ``, \ " \cdot\thole{Country}\cdot ``, \ " \cdot\thole{Year}\cdot ``-" \cdot\thole{Year}
\]

Our synthesis algorithm tries to independently find the completion of each hole with the appropriate type and satisfy the corresponding decomposed positive examples. As before, the positive examples are used to prune the search space: for example, since the second hole must match the strings ``Britain'' and ``France'', the synthesizer can rule out completions such as
$
\{v: \textsf{Country} \ | \ v \in \textsf{Asia} \} 
$
and
$
\{v: \textsf{Country} \ | \ v \in \textsf{Asia} \ \land \ldots \} 
$. Similarly, type information in the sketch is critical, enabling the synthesizer to avoid enumerating useless sub-programs. For instance, when synthesizing the last hole in the sketch, the synthesizer would not enumerate programs such as $\{ v: \mathsf{Month} \ | \ldots  \} \cup \{v: \mathsf{Date}  \ | \ldots \} $, since this regex can match strings that are not of type $\mathsf{Year}$. It would, however, consider regexes of the form $\{ v: \mathsf{Year} \ | \ v \leq \ldots \}$, as the strings that are matched by this regex would be a subtype of year. After independently synthesizing each hole, the  algorithm checks whether the resulting regex $\regex$ rejects all negative examples and, if so, returns $\regex$ as a solution. Otherwise, it generates a different regex by looking for a different completion for at least one of the holes.

\section{Semantic Regular Expressions }\label{sec:dsl}

In this section, we describe the syntax and semantics of our proposed semantic regular expression language. At a high level, semantic regexes combine standard regular expression operators with pre-trained neural networks that identify semantic types and provide knowledge about the world. 

\begin{figure}[t]
\vspace{-0.5cm}
\small
\[
\begin{array}{rl}
\rho ::= & \lambda s. \ {\tt match}(s, \regex) \\

\regex ::= & c \ | \ cc   \ | \ \emptyset \\
& | \ \matchsemq{(\type_q)}{f}  \ | \ \matchsem{(\type_b)}{f}{\phi}  \\ 
& | \ \neg \regex \ | \ \regex? \ | \ \regex* \ | \ \regex+ \ | \ \regex\{k_1\} \ | \ \regex\{k_1, k_2\} \\
& | \ \regex \cdot \regex \ | \ \regex \cup \regex \ | \ \regex \cap \regex \\

f ::= &  {\tt id} \ | \ {\tt toUpper} \ | \ {\tt toLower} \ | \ {\tt abbreviate}[c] \\ 

\phi ::= &  \top \ | \ \neg \phi \ | \ \phi \wedge \phi \ | \ \phi \vee \phi \\ 
        & | \  t \oplus_{\type_b} t   \ \ {\tt where} \ \oplus \in \{\leq, \geq, =, \in \}\\

t ::=  & v \ | \ v.a \ | \ c \ | \ n \\

\type_b ::= & {\sf Person} \ \mid \ {\sf Organization} \ \mid \ {\sf Product} \ \mid \ {\sf Event} \ \mid \ {\sf Work\ of\ Art}  \\
        & \mid  {\sf Number} \ \mid \ {\sf Integer} \ \mid \ {\sf Float} \\
        & \mid \ {\sf Date} \ \mid \ {\sf Year} \ \mid \ {\sf Month} \ \mid \ {\sf Day} \\ 
        & \mid \ {\sf Time} \ \mid \ {\sf Hour} \ \mid \ {\sf Minute} \ \mid \ {\sf Second} \\ 
        & \mid \ {\sf Place} \ \mid \ {\sf Location} \ \mid \ {\sf Nationality} \ \mid \ {\sf Country} \ \mid \ {\sf City} 

\end{array}
\]
\vspace{-0.5cm}
\caption{Semantic string matching language. $c$ is a constant string, $cc$ is a character class (e.g. letters). $\type_b$ is a built-in base type,  and $\type_q$ is an arbitrary base type in our type system.  Also, $k \in \mathds{Z} $, $n \in \mathds{R}$, and $a \in {\tt Attributes}$, where $\tt Attribute$ is type-dependent. }
\label{fig:dsl}
\vspace{-0.5cm}
\end{figure}

\paragraph{\bf DSL Syntax} The syntax of our semantic string matching language  is presented in Figure~\ref{fig:dsl}. A semantic regex $\rho$ takes as input a string $s$ and returns a boolean indicating whether there is a match. Semantic regexes include all the standard regular expression constructs, including constant strings $c$, character classes like letters and numbers (denoted $cc$), concatenation ($\cdot$), complement ($\neg$), union ($\cup$), intersection ($\cap$),  and Kleene star $(*)$. Additionally, the notation $r\{k_1\}$ denotes repetition of $r$ $k_1$ times and $r \{ k_1, k_2\}$ denotes $r$ repeated between $k_1$ to $k_2$ times. As standard, $r?$ indicates an optional occurrence of $r$, and $r+$ denotes one or more occurrences of $r$.

In addition to these standard regex constructs, Figure~\ref{fig:dsl} includes two \emph{semantic pattern matching constructs}, denoted as $\{ v: f(\tau_q) \}$ and $ \{ v: f(\tau_b) \ | \ \phi \}$, where $f$ is an (optional) built-in function, $\tau_b$ is a built-in type (\textsf{Integer}, \textsf{Month}, etc) and $\tau_q$ is an \emph{arbitrary} (user-defined) type. Note that the DSL does not place any restrictions on $\tau_q$, so the user can provide any arbitrary string to define their own type. However, we only allow a logical qualifier $\phi$ to be used for built-in types.

In the most basic form, the construct $\{ v: \tau \}$ matches strings that are semantically of type $\tau$, where $\tau$ can either be a built-in or user-defined type. For example, $\{ v: \textsf{Place} \} $ matches any string that corresponds to a geographical location. The optional function $f$ used in this construct allows refining the query result by performing additional semantic-preserving string processing. For example, $\{ v: {\tt toUpper}(\textsf{Place}) \} $ matches any string that corresponds to a location name in upper case letters (e.g., ``NEW YORK''). More generally, $\{ v: f(\tau) \}$ matches a string $s$ if $s$ is equal to $f(s')$ where $s'$ is a string of type $\tau$. As another example, $\{ v: {\tt abbreviate}[.](\textsf{Place}) \}$ matches the strings ``N.Y.'', ``S.F.'' etc. because the function ${\tt abbreviate}[c]$ abbreviates a string through initialism, using the character $c$ as a separator. 

When performing semantic pattern matching using built-in types $\tau_b$, one can additionally use a logical qualifier $\phi$. In particular, $\{ v: \tau_b \ | \ \phi  \} $ matches those strings that are of type $\tau_b$ and additionally satisfy predicate $\phi$. To check whether a string $s$ satisfies $\phi$, $s$ is first parsed as an instance $o$ of type $\tau_b$ and then checked for conformance against $\phi$. Note that these semantics justify why logical qualifiers are only allowed with built-in types: because we need to parse the string as an instance of $\tau_b$, there must be some built-in mechanism for deserializing the string, which only makes sense for pre-defined types. As an example,  the semantic 
 regex $\{ v: \textsf{Float} \ | \ v < 0.1 \} $ matches strings that can be interpreted as a floating point number whose value is less than $0.1$ (e.g., $0.0051$).  As another example, $\{ v: {\tt toUpper}(\textsf{City}) \ | \ v \in \textsf{Europe} \} $ matches strings, such as ``ROME'' that (a) correspond to European cities and (b) are in upper case letters.

\begin{figure}[t]
\vspace{-0.5cm}
\small
\[
\begin{array}{r l}
\sem{\lambda s. \ {\tt match}(s, \regex)}s = & s \in \sem{\regex} \\
\sem{c} = & \{c\} \\
\sem{\neg \regex} = & \{s \ | \ s \notin \sem{\regex}\}\\
\sem{\regex^0} = & \{\epsilon\} \\ 
\sem{\regex^i} = & \{s_1 \cdot s_2 \ | \ s_1 \in \sem{\regex^{i-1}}, s_2 \in \sem{\regex}\} \\ 
\sem{\regex *} = & \bigcup_{n \in \{0..\infty\}} \sem{\regex^n}  \\
\sem{\regex_1 \cdot \regex_2} = & \{s_1 \cdot s_2 \ | \ s_1 \in \sem{\regex_1}, s_2 \in \sem{\regex_2} \} \\
\sem{\regex_1 \cup \regex_2} = & \sem{\regex_1} \cup \sem{\regex_2} \\ 
\sem{\regex_1 \cap \regex_2} = & \sem{\regex_1} \cap \sem{\regex_2} \\ 
\sem{\matchsemq{(\type_q)}{f}} = & \{\sem{f}s \ | \ {\tt SemanticType}(s) = \type_q\}  \\ 
\sem{\matchsem{(\type_b)}{f}{\phi}} = & \{\sem{f}s \ | \ {\tt SemanticType}(s) = \type_b \wedge {\tt Cast}{\tt <}\type_b{\tt >}(s) = o \wedge \phi(o)\}
\end{array}
\]
\vspace{-0.5cm}
\caption{Semantics of matching part of the DSL. Here, {\tt SemanticType} is an oracle that determines the semantic type of string $s$, {\tt Cast<}$\type${\tt >} casts string $s$ to object $o$ of type $\type$.}  \label{fig:dsl_semantics_match} 
\vspace{-0.5cm}
\end{figure}

\paragraph{\bf DSL Semantics} Figure~\ref{fig:dsl_semantics_match} presents the formal semantics of  our DSL for semantic string matching, where  $\sem{r}$ denotes the set of all strings that $r$ matches.\footnote{Semantics of functions are provided in the appendix.} 
Observe that the semantics of the DSL is parametrized by a helper function called {\tt SemanticType}, which is implemented by a pre-trained neural network and which is used to check whether the type of a string $s$ is $\tau$. Hence, the construct $\{ v: f(\tau_q)\}$ matches all strings $s$ such that (a) $s = f(s')$ for some string $s'$, and (b) where ${\tt SemanticType}(s') = \tau_q$.  Similarly, $\{ v: f(\tau_b) \ | \ \phi \}$ matches all strings $s$ such that (a) $s = f(s')$ for some string $s'$, (b) $s'$ is an instance of built-in type $\tau_b$, and (c) when $s'$ is parsed into an object $o$ of type $\tau_b$, o satisfies predicate $\phi$.

\begin{example}
 The semantic regex $\{ v: \textsf{Date} \ | \ v.\textsf{month} = 5\}$ matches all strings that represent dates in May. In particular, any string matching a Date is first parsed into a datetime object and its \textsf{month} field is checked for being equal to 5.  Examples of strings matched by this regex include ``May 2023'' and ``2023-05-01''. 
\end{example}
\section{Overview of the Type System}\label{sec:type}

While our semantic regex DSL is not \emph{explicitly} typed, our approach utilizes a type system to facilitate effective synthesis. In this section, we give an overview of the type system. 

\subsection{Type Syntax}\label{subsec:type_syntax}

\begin{figure}[t]
\small
\[
\begin{array}{r l}
\type := & {\sf Any}   \ \mid \ \opttype{\type'} \ \mid \ \type' \\
\type' := & \semtype{\type_s} \ \mid \ {\sf CharSeq} \\ 
\type_s := & {\sf Person} \ \mid \ {\sf Organization} \ \mid \ {\sf Product} \ \mid \ {\sf Event} \ \mid \ {\sf Work\ of\ Art}  \\ 
        & \mid \ {\sf Number} \ \mid \ {\sf Integer} \ \mid \ {\sf Float} \\
        & \mid \ {\sf Date} \ \mid \ {\sf Year} \ \mid \ {\sf Month} \ \mid \ {\sf Day} \\ 
        & \mid \ {\sf Time} \ \mid \ {\sf Hour} \ \mid \ {\sf Minute} \ \mid \ {\sf Second} \\ 
        & \mid \ {\sf Place} \ \mid \ {\sf Location} \ \mid \ {\sf Nationality} \ \mid \ {\sf Country} \ \mid \ {\sf City} 
\end{array}
\]
\vspace{-0.5cm}
\caption{Type syntax. }  
\label{fig:type_syntax}
\vspace{-0.5cm}
\end{figure}

The syntax of our type system is shown in Figure~\ref{fig:type_syntax}, where \textsf{Any} corresponds to the top element in the type system and ${\sf CharSeq}$ indicates any string without semantic meaning, such as ``1a2b3c'', ``,.3d,.'' etc. The type $\semtype{\type_s}$ indicates strings that can interpreted as instance of $\type_s$ (e.g., \textsf{Date}). In addition, the type $\opttype{\type}$ includes both $\epsilon$ (empty string) as well as any string of type $\type$. Semantic types $\tau_s$ include both built-in types $\tau_b$ (e.g., \textsf{Integer}, \textsf{Float}, \textsf{Date}) as well as user-defined types $\tau_q$. Hence, the type syntax is \emph{not} fixed a priori and is parametrized over any user-defined types that occur in the program.

\subsection{Subtyping}

Our type system supports subtype polymorphism because there is a natural subtyping relation between many entities of interest.  We formalize the subtyping relation in Figure~\ref{fig:subtyping} using the standard judgment $\vdash \type_1 \subtype \type_2$, indicating that $\type_1$ is a subtype of $\type_2$. In Figure~\ref{fig:subtyping}, the first three rules are straightforward and establish \textsf{Any} as the top element of the type system. The following rules (until {\sc Trans}) show the subtyping relation involving built-in semantic types. For example, according to these rules, \textsf{Year}, \textsf{Month}, and \textsf{Day} are all subtypes of the more generic \textsf{Date} type. The {\sc Trans} rule states the transitivity of the subtyping relation and the {\sc Semantic} rule lifts the subtyping relation to ${\tt Semantic}(\tau)$. The last two rules for {\sc Optional} are also standard: {\sc Optional-Width} states that any type $\tau$ is a subtype of ${\tt Optional}(\tau)$ and the last rule lifts the subtyping relation to optional types. Finally, the last rule handles subtyping between user-defined types. If the set of objects represented by  $\tau_1$ is a  subset of those represented by $\tau_2$, we have $\tau_1 \subtype \tau_2$. In practice, we perform this check by querying a semantic ontology (specifically, DBPedia~\cite{dbpedia} in our implementation). 

\begin{figure}[t]
\vspace{-0.5cm}
\[
\small
\begin{array}{lllll}
    & \vdash {\sf CharSeq} \subtype {\sf Any} 
    & \vdash \semtype{\type_s} \subtype {\sf Any} 
    & \vdash \opttype{\type} \subtype {\sf Any}  \\
    & \vdash {\sf Year} \subtype {\sf Date}  
    & \vdash {\sf Month} \subtype {\sf Date} 
    & \vdash {\sf Day} \subtype {\sf Date} \\
    & \vdash {\sf Hour} \subtype {\sf Time} 
    & \vdash {\sf Minute} \subtype {\sf Time} 
    & \vdash {\sf Second} \subtype {\sf Time} \\
    & \vdash {\sf Country} \subtype {\sf Place} 
    & \vdash {\sf City} \subtype {\sf Place} \\ 
    & \vdash {\sf Institution} \subtype {\sf Organization}  
    & \vdash {\sf Company} \subtype {\sf Organization} \\
\end{array}    
\]
\footnotesize
\begin{mathpar}
    \inferrule*[Left=Trans]{\vdash \type'' \subtype \type' \ \ \ \vdash \type' \subtype \type}{\vdash \type'' \subtype \type} \and
    \inferrule*[Left=Semantic]{\vdash \type' \subtype \type}{\vdash \semtype{\type'}  \subtype \semtype{\type}} \\
    \inferrule*[Left=Optional-Width]{\ \ \ \ }{\vdash \type \subtype \opttype{\type}} \\
    \inferrule*[Left=Optional-Congruence]{\vdash \type' \subtype \type}{\vdash \opttype{\type'} \subtype \opttype{\type}} \and
    \inferrule*[Left=User-defined]{ \gamma(\type') \subseteq \gamma(\type)}{\vdash \type' \subtype \type}
\end{mathpar}
\vspace{-0.5cm}
\caption{Subtyping relations. $\gamma(\tau)$ is the concretization function denoting the set of objects represented by $\tau$.}
\label{fig:subtyping}
\vspace{-0.5cm}
\end{figure}

\subsection{Typing Rules}

We present the typing rules for assigning types to DSL terms in Figure~\ref{fig:typing_rules}. These rules derive judgments of the form $\vdash t: \type$ indicating that term $t$ has type $\type$. Note that Figure~\ref{fig:typing_rules} only shows a representative subset of the typing judgments;  the full set is presented in the Appendix under supplementary materials. 
\begin{figure}[ht]
\footnotesize
\begin{mathpar}
    \inferrule*[Left=Const-Semantic]{ {\tt SemanticType}(c) = \type\\\\\type \neq {\sf CharSeq}}{\vdash c: {\tt Semantic}(\type)}\and \ \ \ \
    \inferrule*[Left=Const-CharSeq]{{\tt SemanticType}(c) = {\sf CharSeq} }{\vdash c: {\sf CharSeq}}\\
    \inferrule*[Left=CC]{ cc \neq {\tt <Num>} }{\vdash cc: {\sf CharSeq}} \and
    \inferrule*[Left=CC-Num]{ cc = {\tt <Num>} }{\vdash cc: {\tt Semantic}({\sf Number})} \\ 
    \inferrule*[Left=matchSem]{ \ \ \ }{\vdash \matchsem{(\type_b)}{f}{\phi} : {\tt Semantic}(\type_b)} \and
    \inferrule*[Left=matchSem]{ \ \ \ }{\vdash \matchsemq{(\type_q)}{f} : {\tt Semantic}(\type_q)} \\
    \inferrule*[Left=Lifting]{\vdash r: \type \ \ \  \type \subtype \type'}{\vdash r: \type'} \and
    \inferrule*[Left=Optional]{\vdash r: \type}{\vdash r?: {\tt Optional}(\type)} \\
    \inferrule*[Left=Union]{\vdash r_1: \type_1 \ \ \ \vdash r_2: \type_2}{\vdash r_1 \cup r_2 : \type_1 \join \type_2} \and
    \inferrule*[Left=And]{\vdash r_1: \type_1  \ \ \ \vdash r_2: \type_2}{\vdash r_1 \cap r_2 : \type_1 \meet \type_2} \\
    \inferrule*[Left=Not]{\vdash r: \type}{\vdash \neg r: {\sf Any}} \and
    \inferrule*[Left=Concat]{\vdash r_1: \type_1 \ \ \ \vdash r_2: \type_2}{\vdash r_1 \cdot r_2 : {\sf Any}}

\end{mathpar}
\caption{Typing rules.}    
\label{fig:typing_rules}
\vspace{-0.5cm}
\end{figure}

\paragraph{\bf Constant and characters} The first four rules show how to assign types to string constants  and character classes. For constants, we determine their type by querying a semantic oracle (GPT-3 in our implementation) and assign  ${\sf CharSeq}$ if the oracle does not return a semantic type.\footnote{The semantic oracle returns ${\sf CharSeq}$ if the string has no semantic meaning.} Character classes only have semantic meaning for numbers, so we assign the ${\tt Semantic}({\sf Number})$ type if the character is a number, and ${\sf CharSeq}$ otherwise.

\paragraph{\bf Semantic matching} The \textsc{MatchSem} rules present  the typing rules for the semantic matching construct. The type of the expression is identical to the type specified as part of the program syntax.

\paragraph{\bf Union and intersection} The typing rules for union and intersection presented in the \textsc{Union} and \textsc{And} rules, respectively. These rules utilize the   $\join$ and $\meet$ operators, which are defined in Figure~\ref{fig:type_intersection}. At a high level, the meet and join of two types are determined as the least upper bound ($\sqcup$) and the greatest lower bound ($\sqcap$), respectively, in the corresponding type lattice. However, there is a special case for the ${\sf CharSeq}$ type: Intuitively, taking the intersection of a semantic type $\tau$ and \textsf{CharSeq} further refines the objects of type $\tau$ by placing an \emph{additional} syntactic restriction; hence, ${\tt Semantic}(\tau) \land \textsf{CharSeq}$ is defined as ${\tt Semantic}(\tau)$. In contrast, the join of ${\tt Semantic}(\tau)$ and $\textsf{CharSeq}$ is the top element \textsf{Any}, as expected.

\paragraph{\bf Not and concatenation} The \textsc{Not} and \textsc{Concat} are two cases where specific types cannot be inferred. Even though the type of their arguments is known, the resulting type cannot be determined, resulting in an output type of ${\sf Any}$.


\begin{figure}[ht]
\vspace{-0.5cm}
\small
    \[
    \begin{array}{rl}
        \type \meet {\sf Any} = & \type \\
        \type_1 \meet \opttype{\type_2} = & \type_1 \meet \type_2 \\ 
        \opttype{\type_1} \meet \opttype{\type_2} = & \opttype{\type_1 \meet \type_2} \\ 
        \semtype{\type_1} \meet \semtype{\type_2} = & \semtype{\type_1 \sqcap \type_2} \\
        \semtype{\type} \meet {\sf CharSeq} = & \semtype{\type} \\\\
        
        \type \join {\sf Any} = & {\sf Any} \\
        \type_1 \join \opttype{\type_2} = & \opttype{\type_1 \join \type_2} \\ 
        \opttype{\type_1} \join \opttype{\type_2} = & \opttype{\type_1 \join \type_2} \\ 
        \semtype{\type_1} \join \semtype{\type_2} = & \semtype{\type_1 \sqcup \type_2} \\ 
        \semtype{\type} \join {\sf CharSeq} = & {\sf Any}
    \end{array}
    \]
    \vspace{-0.5cm}
    \caption{Type intersection and union.}
    \label{fig:type_intersection}
    \vspace{-0.5cm}
    \end{figure}

\section{Learning Semantic Regexes from Examples}\label{sec:sketch_synthesis}

In this section, we describe our synthesis algorithm for solving the semantic string matching problem from examples. Our method involves two main steps: generating a \emph{typed sketch} from the positive examples and completing the sketch using an enumerative search-based synthesizer. If sketch completion fails, our method refines the sketch and performs synthesis using the new sketch.
In the rest of this section, we first provide some preliminary information, then present our top-level learning algorithm, and then describe each of its key components.

\subsection{Sketch Language}\label{sec:sketch}

Our learning algorithm crucially relies on the notion of a \emph{typed sketch} whose syntax is shown in Figure~\ref{fig:sketch}. At a high level, the sketch language extends our semantic regex DSL by allowing a ``typed hole''  (denoted $\thole{\type}$) which represents an arbitrary expression  of type $\type$. Given a sketch $\sketch$, we use the notation $\sem{\sketch}$ to denote the set of all semantic regexes that can be obtained by completing holes in $\sketch$ by valid expressions of the corresponding type. Figure~\ref{fig:sketch} also defines sketch semantics in terms of the space of all programs they represent.

\begin{figure}[!t]
\vspace{-0.5cm}
    \small
    \begin{minipage}[c]{0.45\textwidth}
    \[
    \begin{array}{rll}
    \sketch := & \regex     & \text{(regex)} \\
            & | \ {\tt f}(\overline{\sketch}) & \text{(operator in the language)}\\ 
            & | \ \thole{\type} & \text{(typed hole)}
    \end{array}
    \]
    \end{minipage}
    \begin{minipage}[c]{0.45\textwidth}
    \[
        \begin{array}{rl}
        \sem{\regex} = & \{\regex\} \\ 
        \sem{{\tt f}(\overline{\sketch})} = & \{{\tt f}(\overline{\regex} \ | \ \forall_{i \in | \overline{\regex}| } \regex_i \in \sem{ \sketch_i}) \}\\
        \sem{\thole{\type}} = & \{\regex \ | \ \vdash \regex : \type\}
        \end{array}
    \]
    \end{minipage}
    \caption{Sketch syntax and its semantics. Here ${\tt f}$ refers to any construct in the DSL defined in Figure~\ref{fig:dsl}.}
    \label{fig:sketch}
    \end{figure}

\begin{example}
Consider the sketch $\thole{Organization} \cdot ``.com"$, which represents the space of semantic regexes that match strings consisting of an organization name followed by the string constant ``.com''. Possible completions of this sketch include, but are not limited, to the following semantic regexes: (1) $\matchsemq{{\sf Company}}{}\cdot ``.com"$, (2) $\matchsemq{{\sf Institution}}{} \cdot ``.com"$, and (3)  $(\matchsemq{{\sf Institution}}{} \cup \matchsemq{{\sf Company}}{}) \cdot ``.com"$.
\end{example}

\subsection{Top-level algorithm}\label{sec:top-level}

Our top-level algorithm is outlined in Figure~\ref{fig:top-level}. Given a set of positive examples $\ex^+$ and a set of negative examples $\ex^-$, \textsc{Synthesize} returns a semantic regex that accepts all positive examples and rejects all negative examples. At a high level, the algorithm  repeatedly generates a new sketch using a large language model, then attempts to find a valid instantiation of that sketch, and continues this process  until it finds a regex that is consistent with all user-provided examples. Intuitively, each candidate sketch serves as a possible generalization of the positive examples, and the goal of the synthesizer is to determine whether that sketch is a suitable generalization. 

\begin{figure}[t]
\vspace{-0.5cm}
    \small
    \begin{algorithm}[H]
    \begin{algorithmic}[1]
    \Procedure{Synthesize}{$\ex^+, \ex^-$}
    \Statex \Input{A set of positive $\ex^+$ and negative examples $\ex^-$.}
    \Statex \Output{A program that is consistent with the examples.}
    \State $\sketch_f \assign \bot$;
    \While{${\tt HasMoreSketch}(\ex^+)$}
    \State $\sketch \assign \textsc{GetNextSketch}(\ex^+, \sketch_f)$;
    \While{${\tt HasDecomp}(\sketch, \ex^+)$}
    \State $\goal \assign \textsc{GetNextDecomp}(\sketch, \ex^+)$;
    \State $M \assign \textsc{SynthesizeFromDecomp}(\sketch, \goal, \ex^-)$;
    \If{$M \neq \bot$} \Return $\sketch[M]$;
    \EndIf 
    \EndWhile
    \State $\sketch_f \assign \sketch$;
    \EndWhile
    \State \Return $\bot$;
    \EndProcedure 
    \end{algorithmic}
    \end{algorithm}
    \vspace{-0.5in}
    \caption{Top-level synthesis algorithm. Here, $\sketch[M]$ means replacing each hole $h \in \sketch$ with $M[h]$. } 
    \label{fig:top-level}
    \vspace{-0.5cm}
\end{figure}

In more detail, the \textsc{Synthesize} procedure first calls  \textsc{GetNextSketch}, which queries GPT-3 to produce a sketch $\sketch$ that is likely to satisfy the positive examples.  Then, for a given  sketch $\sketch$, {\sc GetNextDecomp}  infers a \emph{decomposition} $\goal$, which is a mapping from each hole in $ \sketch$ to a set of positive examples for that hole.  Then, for a given decomposition $\goal$, the algorithm calls {\sc SynthesizeFromDecomp} to perform compositional synthesis based on the inferred specification $\goal$. 

If the call to {\sc SynthesizeFromDecomp} returns a non-empty mapping $M$, which maps each hole in $\sketch$ to a concrete regex $\regex$, we find a solution that is consistent with the specification and returns the synthesized regex by replacing the holes in $\sketch$ with the corresponding solution in $M$. Otherwise, if the call to {\sc SynthesizeFromDecomp} yields $\bot$, there are two possibilities: Either the decomposition $\goal$ is incorrect (recall from Section~\ref{sec:overview} that there is ambiguity in how to assign positive examples to holes), or the sketch $\sketch$ itself is incorrect. In the former case, the algorithm considers a different decomposition, which maps at least one of the holes in the sketch to a different set of examples. If the algorithm exhausts all possible decompositions, this means that the sketch must be incorrect and the algorithm repairs the current sketch by performing fault localization and querying GPT-3 to produce a different generalization of the positive examples. This process continues until the algorithm finds a globally consistent regex with all (positive \emph{and} negative) examples or runs out of possible sketches. 
 In the following discussion, we explain each of the three components (decomposition, type-directed synthesis, and sketch repair) in more detail.

\subsection{Decomposing the Specification}\label{sec:decomp}
To perform compositional synthesis, our learning algorithm decomposes the global specification into a \emph{set} of specifications, one for each hole in the sketch. In this section, we describe the {\sc GetNextDecomp} procedure for specification decomposition using the inference rules  in Figure~\ref{fig:get_next_goal}, which derive judgments of the following shape:
\[
\ex^+\vdash \sketch \leadsto \goal
\]
The meaning of this judgment is that, given positive examples $\ex^+$, $\goal$ is a \emph{possible} decomposition  that maps each hole in the sketch to its corresponding positive  examples. As mentioned earlier, the decomposition is, in general, \emph{not} unique, so there can be multiple decompositions $\goal_1, \ldots, \goal_n$ for a given sketch $\sketch$. 

We now explain the decomposition rules from Figure~\ref{fig:get_next_goal} in more detail. The first rule, labeled {\sc Sketch-Match}, considers a program sketch with top-level operator $f$ (e.g., concatenation or intersection) and sub-sketches $S_1, \ldots, S_n$. To infer a specification for each hole in $\sketch$, we first generate a regex $\regex^\star$ that over-approximates $\sketch$ (via the call to 
${\tt OverApprox}$). Intuitively, ${\tt OverApprox}$ generates a regex $\regex^\star$ such that for \emph{any} $r \in \sem{S}$, $\regex^\star$ accepts every string that is accepted by $r$. Because our over-approximation approach is exactly the same as used in prior work~\cite{alpharegex,regel}, we do not formally present it, but the basic idea is  to replace each hole that appears under an even (resp. odd) number of negation symbols by the regex $.*$ (resp. $\emptyset$).  This method guarantees that the resulting regex $r^\star$ will accept every string that is accepted by any instantiation of $\sketch$. Furthermore, note that $r^\star$ is a standard regex without any semantic pattern matching constructs, as all holes have been replaced by either the universal or the empty set. 

\begin{figure}[t]
\vspace{-0.5cm}
    \small
    \begin{mathpar}
       \inferrule*[left=Sketch-Match]{
       \sketch = f(\sketch_1, \ldots, \sketch_n) \ \ \ r ^\star= {\tt OverApprox}(\sketch)  \\\\
       (\ex_1^+, \ldots, \ex_n^+) \in {\tt Match}(r^\star, \ex^+) \\\\
        \ex^{+}_i \vdash \sketch_i \leadsto \goal_i  \ \ \ i \in \{1, \ldots, n \}
        }{\ex^+ \vdash \sketch \leadsto {\tt Merge}(\goal_1, \ldots, \goal_n)}\\
        \inferrule*[left=Sketch-NoPosMatch]{
       \sketch = f(\sketch_1, \ldots, \sketch_n) \ \ \ r ^\star= {\tt OverApprox}(\sketch)  \\\\
       {\tt Match}(r^\star, \ex^+) \equiv \emptyset
        }{\ex^+ \vdash \sketch \leadsto \bot}\\
        \inferrule*[left=Concrete-Feasible]{
        {\tt Match}(r, \ex^+) \neq \emptyset
        }{\ex^+ \vdash \regex \leadsto \emptyset}\and
        \inferrule*[left=Concrete-Infeasible]{
        {\tt Match}(r, \ex^+) \equiv \emptyset
        }{\ex^+ \vdash \regex \leadsto \bot} \\
       \inferrule*[left=Hole-Feasible]{\forall e \in \ex^+. \ \ \ {\tt SemanticType}(e) \subtype  \type
       }{\ex^+ \vdash \thole{\type} \leadsto [\thole{\type} \mapsto \ex^+]}\and       
       \inferrule*[left=Hole-Infeasible]{\exists e \in \ex^+. \ \ \ {\tt SemanticType}(e) \not\subtype \type}{\ex^+ \vdash \thole{\type} \leadsto \bot}
    \end{mathpar}
    \caption{Procedure for $\textsc{GetNextDecomp}(\sketch, \ex^+)$. ${\tt OverApprox}(\sketch)$ returns a concrete regex that over-approximates $\sketch$. ${\tt Merge}$ returns $\bot$ if one of its argument is $\bot$, otherwise it disjointly unions all its arguments.}
    \label{fig:get_next_goal}
    \vspace{-0.5cm}
\end{figure}

Next, once we generate the over-approximation $\regex^\star$, we infer positive  examples for each sub-sketch $\sketch_1, \ldots, \sketch_n$ used in $\sketch$. To do so, for each positive example $e$, we use a standard regex matching tool to find a parse of $e$ into the format $f(S_1, \ldots, S_n)$ with corresponding sub-strings $e_i$ for each sub-sketch $\sketch_i$. After propagating each example $e_i$ to nested sketch $S_i$ and recursively applying the inference rules, we obtain the decomposed specifications $\goal_1, \ldots, \goal_n$ for each of the sub-sketches in $\sketch$. These mappings are finally combined via the call to the ${\tt Merge}$ function, defined as follows:
\[
\small
{\tt Merge}(\goal_1, \ldots, \goal_n) = 
\left \{
\begin{array}{ll}
\bot & {\rm if} \ \exists i \in [1, n]. \ \goal_i = \bot \\
\biguplus_{i=1}^n \goal_i & {\rm otherwise}
 \end{array}
\right .
\]
where the notation $\uplus$ indicates disjoint union.

The next rule, labeled {\sc Sketch-NoMatch}, corresponds to an infeasible sketch or decomposition. Because every string accepted by  $r \in \sem{\sketch}$ must also be 
accepted by the over-approximation $r^\star$, the algorithm yields $\bot$ to indicate a failure when $r^\star$ doesn't match at least one of the positive examples.

The remaining rules correspond to the base cases of the recursive decomposition algorithm. Specifically, the rules prefixed with {\sc Concrete} consider the case where the sketch is a concrete regex $r$ without a hole. Specifically, we check the feasibility of $r$ by testing whether it matches all of the positive examples. If so, the sketch is feasible, and the algorithm returns the empty mapping $\emptyset$. Otherwise (the {\sc Concrete-Infeasible} case), the algorithm returns $\bot$ to indicate failure. 

The final two rules correspond to base cases for a  hole and utilize the fact that sketches are typed. In particular, given a hole of type $\tau$, if there exists a positive example $e \in \mathcal{E}^+$ whose type is not $\tau$, this indicates a conflict and the algorithm returns $\bot$ in the {\sc Hole-Infeasible} rule. Otherwise, in the {\sc Hole-Feasible} rule, the constructed specification maps this hole to the input positive examples $\mathcal{E}^+$.

\begin{example}
   Consider the positive examples from Section~\ref{sec:overview} and  the following sketch: \[\small \thole{Name}\cdot ``, \ " \cdot\thole{Country}\cdot ``, \ " \cdot\thole{Year}_1\cdot ``-" \cdot\thole{Year}_2 \]
   The over-approximation for this sketch is the following regex:
\[\small
.* \cdot ``, \ " \cdot .* \cdot ``, \ " \cdot .*  \cdot ``-" \cdot .* \]
Using our decomposition technique, we infer the following positive examples for each hole:
    \begin{center}
    \footnotesize
    \begin{tabular}{cccc}
        \toprule
         $\thole{Name}$ & $\thole{Country}$ & $\thole{Year}_1$ & $\thole{Year}_2$  \\
         \midrule
         John Thomas Young Gilroy & Britain & 1898 & 1985 \\ 
         Thomas Hudson & Britain & 1701 & 1779 \\
         Thomas Couture & France & 1815 & 1879 \\
         \bottomrule
    \end{tabular} 
\end{center}
\end{example}

We conclude this subsection by stating the theorem about the soundness of decomposition:
\begin{theorem}
Consider the synthesis problem with positive examples $\mathcal{E}^+$. Let $\sketch$ be a candidate sketch and let $r$ be a completion of $\sketch$ mapping each hole $ h_i$ in $\sketch$ to a semantic regex $r_i$. If $r$ satisfies all positive examples $\mathcal{E}^+$, then there exists some $\goal \in {\textsc{GetNextDecomp}}(\sketch, \mathcal{E}^+)$ such that every $r_i$ satisfies $\goal[h_i]$.

\end{theorem}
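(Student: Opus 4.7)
I would prove the statement by structural induction on the sketch $\sketch$, following the case split of the inference rules in Figure~\ref{fig:get_next_goal}. In the base case where $\sketch$ is a concrete regex $r$, the completion is $r$ itself; the hypothesis that $r$ satisfies $\mathcal{E}^+$ gives $\mathsf{Match}(r, \mathcal{E}^+) \neq \emptyset$, so {\sc Concrete-Feasible} applies and yields the empty mapping, whose hole constraints are vacuously satisfied. In the other base case $\sketch = \thole{\tau}$, the completion is some $r_h$ with $\vdash r_h : \tau'$ and $\tau' \sqsubseteq \tau$. Appealing to the soundness of the semantic typing rules from Section~\ref{sec:type}, every string in $\sem{r_h}$ must have semantic type a subtype of $\tau$; since $r_h$ matches every $e \in \mathcal{E}^+$, it follows that $\mathsf{SemanticType}(e) \sqsubseteq \tau$ for all such $e$, so {\sc Hole-Feasible} fires, produces the singleton map $[\thole{\tau} \mapsto \mathcal{E}^+]$, and $r_h$ satisfies its single hole constraint by assumption.

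For the inductive case, suppose $\sketch = f(\sketch_1, \ldots, \sketch_n)$ with completion $r = f(r_1, \ldots, r_n)$. Let $r^\star = {\tt OverApprox}(\sketch)$. Because {\tt OverApprox} only weakens the leaf holes while preserving the top-level operator $f$, any parse of a string $e$ witnessed by $r$ is also a valid parse of $e$ by $r^\star$. Since $r$ matches every $e \in \mathcal{E}^+$, this induces a particular tuple $(\mathcal{E}^+_1, \ldots, \mathcal{E}^+_n) \in \mathsf{Match}(r^\star, \mathcal{E}^+)$ for which, by construction, each $r_i$ matches every element of $\mathcal{E}^+_i$. Applying the inductive hypothesis to each pair $(\sketch_i, r_i)$ with propagated examples $\mathcal{E}^+_i$ yields decompositions $\goal_i \neq \bot$ such that each sub-hole of $\sketch_i$ is satisfied by the corresponding piece of $r_i$. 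Since none of the $\goal_i$ is $\bot$, the side condition of {\tt Merge} is discharged and {\sc Sketch-Match} returns $\goal = {\tt Merge}(\goal_1, \ldots, \goal_n)$ as one of the possible outputs of {\sc GetNextDecomp}, witnessing the theorem.

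The main obstacle is the step that establishes that the parse-tuple induced by $r$ is indeed among those enumerated by $\mathsf{Match}(r^\star, \cdot)$. A priori $r^\star$ may admit many distinct parses of each example $e$, and only some of these are compatible with how $r$ matches $e$; moreover, the right tuple must be \emph{uniformly} chosen so that all examples are parsed consistently across the $n$ sub-sketches. The required argument is that (i) the semantics of {\tt OverApprox} guarantees $\sem{r} \subseteq \sem{r^\star}$ while preserving the outer operator structure, so every $r$-parse is an $r^\star$-parse, and (ii) the specification of $\mathsf{Match}$ enumerates all such parses, so the $r$-induced tuple is a member of the returned set. Once this is in hand, the remaining steps are routine applications of the induction hypothesis and the soundness of semantic typing used in the hole base case; the existential quantification over decompositions in the theorem statement is discharged by the nondeterministic choice in {\sc Sketch-Match}.
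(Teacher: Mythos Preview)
Your proposal is correct and follows essentially the same approach as the paper: structural induction on $\sketch$, with the concrete and hole base cases handled via {\sc Concrete-Feasible} and the soundness of semantic typing (the paper's Lemma~\ref{lemma:type_semantic_correspondence}), and the inductive case via {\sc Sketch-Match} together with the parse-preservation property of {\tt OverApprox}. If anything, your inductive step is slightly cleaner than the paper's: you directly argue that the parse tuple induced by the actual completion $r$ must appear among the tuples enumerated by $\mathsf{Match}(r^\star,\cdot)$, whereas the paper establishes the same fact more circuitously by assuming ``precise inverse semantics'' of $f$ and then performing an additional case split on whether each $\sketch_i$ admits a completion for the chosen sub-examples.
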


\subsection{Compositional Type-Directed Synthesis}\label{sec:compsitional}

Next, we explain our compositional learning technique for synthesizing a semantic regex for a given sketch and decomposed specification. This algorithm, called {\sc SynthesizeFromDecomp}, is shown in Figure~\ref{fig:goal_synthesis}. Given a sketch $\sketch$, specification $\goal$, and negative examples $\mathcal{E}^-$, the {recursive} {\sc SynthesizeFromDecomp} procedure lazily generates possible sketch completions until it finds a regex that is globally consistent with the top-level specification. 

\begin{figure}[t]
\small
\vspace{-0.5cm}
    \begin{algorithm}[H]
       \begin{algorithmic}[1]
       \Procedure{SynthesizeFromDecomp}{$\sketch, \goal, \ex^-$} 
       \Statex \Input{A sketch $\sketch$, a specification $\goal$, a set of negative examples $\ex^-$.}
       \Statex \Output{A sketch completion consistent with all examples.}
       \State $\holesym \assign {\tt ChooseHole}(\sketch)$
       \While{${\tt True}$}
       \State $\regex \assign \textsc{GetNextCompletion}({\tt TypeOf}(\holesym), \goal[\holesym], {\tt GetAllSubstr}(\ex^-))$;
       \If{$\regex \equiv \bot$} \Return $\bot$;
        \EndIf
        \While{{\tt True}}
        \State $M \assign \textsc{SynthesizeFromDecomp}(\sketch[\regex/\holesym], \goal, \ex^-)$
        \If{$M \equiv \bot$} {\bf break};
        \EndIf
        \State $M' \assign M \cup [\holesym \mapsto \regex]$; 
        \If{${\tt Reject}(\sketch[M'], \ex^-)$}  {\bf yield} \  $M'$;
        \EndIf
        \EndWhile
       \EndWhile
       \State \Return $\bot$;
       \EndProcedure
       \end{algorithmic}
   \end{algorithm}
   \vspace{-0.5in}
   \caption{Sketch completion algorithm for a given decomposition.}
   \label{fig:goal_synthesis}
   \vspace{-0.5cm}
\end{figure}

To perform synthesis for a given specification, the algorithm starts by choosing one of the holes $h$ in the sketch (line 2) and synthesizes a completion $r$ for that hole \emph{only} by calling {\sc GetNextCompletion} at line 4. Then, the loop in lines 6--10 tries to find a completion for the remaining holes. In particular, in each iteration of the nested loop, the algorithm recursively calls {\sc SynthesizeFromDecomp} to fill all remaining holes, assuming that $h$ is replaced by $r$. If synthesis fails (i.e., $M \equiv \bot$ at line 8), the algorithm moves on to a different completion of $h$. Otherwise, it checks if the current solution (which is obtained by instantiating $\sketch$ with $M \cup [ h \mapsto r] $) rejects all negative examples, and if so, returns this solution. 

\begin{figure}[t]
\vspace{-0.5cm}
\small
    \begin{algorithm}[H]
    \begin{algorithmic}[1]
        \Procedure{GetNextCompletion}{$\goaltype, \ex^+, \ex_{\star}$}
        \Statex \Input{Goal type $\goaltype$,  positive examples $\ex^+$, and a set of strings $\ex_{\star}$ for checking observational equivalence.}
        \Statex \Output{A program of type $\goaltype$ that matches $\ex^+$}
        \State $\prog_0 \assign ((s_\grammar, \goaltype), \emptyset)$; 
        \State $\worklist \assign \{\prog_0\}$; $\res \assign \{\}$;
        \While{$\worklist \neq \emptyset$}
            \State $\prog \assign \worklist.remove()$;
            \If{$\texttt{IsComplete}(\prog)$}
                \If{$\vdash \prog: \goaltype \wedge \bigwedge_{e \in \ex^+} {\tt match}(e, \prog)$}
                \State $\ex \assign \{e \ | \ e \in \ex_{\star} \wedge \neg {\tt match}(e, \prog)\}$;
                \If{$\ex \notin \res$} $\res \assign \res \cup \{\ex\}$; {\bf yield} $\prog$;
                \EndIf
                \EndIf
            \Else
                \ForAll{$\prog' \in \texttt{Expand}(\prog)$}
                \If{$\exists n \in {\tt Nodes}(\prog') . \ {\tt IsComplete}(\prog'(n)) \wedge \vdash {\tt TypeOf}(\prog'(n)) \not\subtype {\tt GoalType}(n)$} 
                \State {\bf continue};
                \ElsIf{$\exists e \in \ex^+. \neg {\tt match}(e, {\tt OverApprox}(\prog'))$} 
                \State {\bf continue};
                \EndIf
                \State $\worklist \assign \worklist \cup \{\prog'\}$;
                \EndFor
            \EndIf
        \EndWhile
        \State \Return $\bot$;
        \EndProcedure    
    \end{algorithmic}    
    \end{algorithm}
    \vspace{-0.6cm}
    \caption{Hole synthesis algorithm. {\tt OverApprox} follows the procedure as described in {\sc Regel}~\cite{regel}. }
    \label{fig:hole_synthesis}
    \vspace{-0.5cm}
    \end{figure}

The final missing piece for our sketch instantiation algorithm is the \textsc{GetNextCompletion} procedure shown in Figure~\ref{fig:hole_synthesis} which performs synthesis for a \emph{single} hole. At a high level, this algorithm performs top-down enumerative search and uses a combination of types~\cite{synquid,myth, frankle16} and observational equivalence~\cite{observational-equiv} to prune the search space. As standard in top-down search, this algorithm utilizes the notion of  \emph{partial programs}~\cite{neo,morpheus}, which can be thought of as an abstract-syntax tree where some of the nodes are labeled with non-terminals to be expanded later. 

In more detail, the hole synthesis algorithm utilizes a worklist $\mathcal{W}$, which is initialized to a partial program $P_0$ with a single node (lines 2--3). Each node in the partial program is annotated with a grammar symbol (in this case, the start symbol $s_\grammar$) and its corresponding type (in this case, $\tau_h$). Then, in each iteration of the loop in lines 4--16, the algorithm dequeues one of the partial programs $P$ in the worklist and processes it. If the partial program is complete (meaning that all nodes are labeled with terminal symbols), the algorithm performs the following checks:
\begin{enumerate}[leftmargin=*]
\item {\bf Type consistency:} If the type of $P$ is \emph{not} $\tau_h$, $P$ clearly does not have the intended type and is rejected (line 7).
\item {\bf Consistency with examples:} If $P$ does not satisfy all positive examples $\mathcal{E}^+$, it does not satisfy the specification and is also rejected at line 7.
\item {\bf Observational equivalence:} If $P$ rejects the \emph{exact same set} of strings as a program the algorithm has previously encountered, it is redundant to consider $P$, as it is observationally equivalent to another solution $P'$ that has been rejected. Hence, the algorithm only yields $P$ as a solution if it is observationally different from a previously encountered solution (lines 8--9).
\end{enumerate}

On the other hand, if the current partial program $P$ is \emph{incomplete} (meaning it has at least one ``open'' node labeled with a non-terminal), the algorithm chooses one of the open nodes  and expands it using the available productions in the grammar (line 11). In particular, given an open node $n$ labeled with a non-terminal $N$, the {\tt Expand} procedure considers each production of the form $ N \rightarrow \alpha$ and adds new nodes where  each new node with a grammar symbol and its corresponding (inferred) type. However, because a resulting expansion $P'$ may not necessarily be feasible, the algorithm performs two additional checks before adding $P'$ to the worklist at line 16:

\begin{itemize}[leftmargin=*]
    \item {\bf Type-directed feasibility check:} For each complete subprogram $P_i$ of $P'$, the algorithm checks if the actual type of $P_i$  is a subtype of its annotated goal type (line 12). If this type feasibility check fails for \emph{any} node $n$, then program $P'$ is pruned from the search space, and none of its expansions are considered.
    \item {\bf Feasibility check using over-approximation:} Additionally, the algorithm constructs an over-approximating regular expression $r^\star$ that accepts every string that is accepted by any $r \in \sem{P'}$ using the same {\tt OverApprox} procedure from Section~\ref{sec:decomp}. If this over-approximation $r^\star$ fails to match one of the positive examples, $P'$ is infeasible and therefore pruned away at lines 14--15. 
\end{itemize}

Otherwise, $P'$ is added to the worklist, and the search process continues until a solution is found.

\begin{theorem}
Let $R$ be the set of solutions returned by {\sc GetNextCompletion}$(\tau_h, \mathcal{E}^+, \mathcal{E}_\star)$. We have:
\begin{itemize}[leftmargin=*]
    \item {\bf Soundness:} Every $r \in R$ is a solution to the hole synthesis problem, meaning (1) $r$ has type $\tau_h$ and (2) satisfies examples $\mathcal{E}^+$
    \item {\bf Completeness:} If $r \not \in R$, then $r$ is either not a solution or is observationally equivalent to some $r' \in R$ for strings $\mathcal{E}_\star$. 
\end{itemize}
\end{theorem}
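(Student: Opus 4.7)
The plan is to prove the two parts separately: soundness follows directly from the yield condition, while completeness requires analyzing each source of search-space pruning to show that no valid (non-redundant) solution is discarded.

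For \textbf{soundness}, I would simply inspect the only place where \textsc{GetNextCompletion} yields a program, namely line~9. Reaching that line requires the guard at line~6 to hold, i.e., $\vdash P : \tau_h$ and $\bigwedge_{e \in \mathcal{E}^+} {\tt match}(e, P)$, which are exactly conditions (1) and (2) of the hole synthesis problem. Furthermore, the \texttt{IsComplete} check at line~5 guarantees $P$ is a concrete semantic regex rather than a partial program, so $P$ is a well-formed solution. This part is essentially by inspection.

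For \textbf{completeness}, I would argue by contradiction: assume some $r$ is a solution (i.e., $\vdash r : \tau_h$ and $r$ matches every $e \in \mathcal{E}^+$) that is \emph{not} observationally equivalent on $\mathcal{E}_\star$ to any element of $R$, and show that $r$ must eventually be yielded. The key step is to track the unique derivation tree of $r$ in the grammar $\mathcal{G}$, and establish by induction on the number of expansions required to produce $r$ that the corresponding partial program is never removed from the worklist. This requires three supporting lemmas, one per pruning step:
\begin{enumerate}[leftmargin=*]
  \item \emph{Type-directed pruning is sound}: if some complete subprogram $P_i$ of a partial program $P'$ satisfies ${\tt TypeOf}(P_i) \not\subtype {\tt GoalType}(n)$, then by the typing rules of Figure~7 (whose subtyping via \textsc{Lifting} is monotone upward) no completion of $P'$ can have the intended type at that node. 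Hence $r$'s derivation never exhibits such a mismatch.
  \item \emph{Over-approximation pruning is sound}: by the defining property of ${\tt OverApprox}$ (borrowed from \textsc{Regel}), $\sem{P'} \subseteq \sem{{\tt OverApprox}(P')}$; therefore if a positive example is not matched by ${\tt OverApprox}(P')$, no completion of $P'$ matches it either, and $r$'s partial derivations are never discarded here.
  \item \emph{Observational equivalence discards only redundant programs}: at line~8, a candidate $P$ is suppressed only when its set of rejected strings in $\mathcal{E}_\star$ already appears in $\res$, i.e., some previously yielded $P' \in R$ agrees with $P$ on all of $\mathcal{E}_\star$. This matches exactly the completeness clause's escape hatch.
\end{enumerate}
Combined with the fact that \texttt{Expand} exhaustively enumerates productions of the grammar, any derivation tree of $r$ survives every pruning step, so either $r$ itself is eventually yielded or an observationally equivalent program is yielded first.

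The \textbf{main obstacle} will be the type-directed pruning lemma. The subtlety is that the typing rules in Figure~7 are not syntax-directed in the strict sense (because of the \textsc{Lifting} rule) and because operators like \textsc{Concat} and \textsc{Not} collapse information to \textsf{Any}. I would address this by working with the \emph{principal} (most specific) type of each complete subprogram and showing that the typing derivation of any completion of $P'$ must use that principal type at the corresponding node, so a mismatch at the principal type implies no completion can type-check at the goal type. This argument relies on the fact that the subtyping relation of Figure~6 is a partial order and that \textsc{Lifting} only moves types upward in the lattice, never downward.
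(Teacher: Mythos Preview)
Your proposal is correct and follows essentially the same route as the paper: soundness by inspection of the yield guard, and completeness via an induction (the paper phrases it as induction on the number of terminals in the AST, packaged as a separate lemma) showing that no partial program leading to a valid solution is ever pruned by the type check or the over-approximation check, with the observational-equivalence filter handled exactly as you describe. Your flagged ``main obstacle'' about the non-syntax-directed typing rules is in fact treated more carefully in your plan than in the paper, which simply assumes ``soundness of the type propagation rules'' at that step.
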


\subsection{Sketch Generation}\label{sec:sketch_gen}

In the final part of this section, we describe our technique for generating typed sketches from examples. In particular, we employ few-shot prompting and build our sketch generator on top of GPT-3~\cite{gpt3}.

\subsubsection{Background on Few-Shot Prompting with LLMs} In recent years, large language models (LLMs)~\cite{gpt3, palm} have made major breakthroughs in natural language understanding. These are models $P(\mathbf{x}) = P(x_1) P(x_2 \mid x_1) \ldots P(x_n \mid x_1, \ldots, x_{n-1})$ modeling a sequence as a product of distributions over each next word via the chain rule.

By showing LLMs a few examples of a task to perform and then giving them a test example, LLMs can perform that task on the test example via \emph{in-context learning}, without retraining or fine-tuning the model's parameters. The user only needs to provide a few examples and invoke the model's next-word prediction capabilities (repeatedly taking the most likely next token under the model).
To give a concrete example, consider the task of transforming numbers in strings to texts, a task that GPT-3 has not specifically been  trained on. Figure~\ref{fig:example_prompt} shows a typical usage scenario of GPT-3 when performing such a task: here, line 1 provides the task description, lines 2-4 provides a few examples,  line 5 is the query, and the output of the model is highlighted in red. 

\begin{figure}
    \vspace{-0.2cm}
    \centering
    \includegraphics[scale=0.3, trim=0 50 800 820, clip]{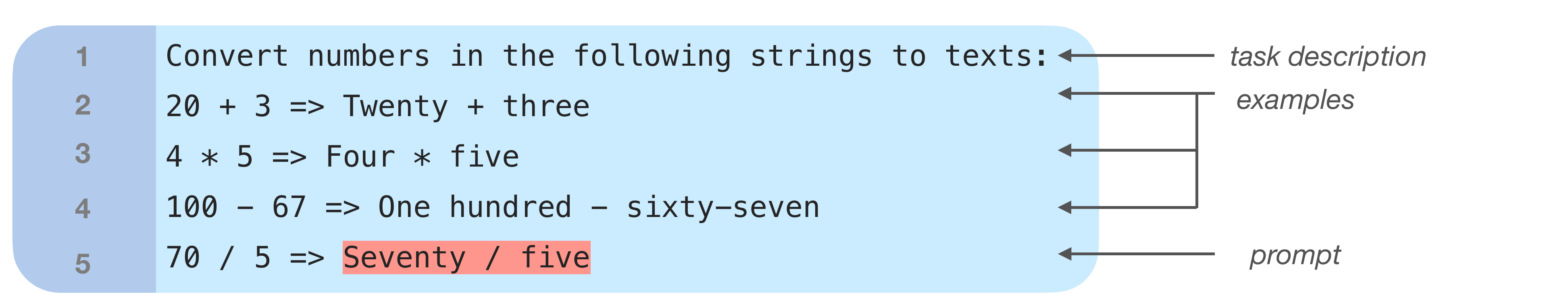}
    \vspace{-0.2cm}
    \caption{Sample input for a few-shot string transformation to GPT-3 and its output is highlighted in red.}
    \label{fig:example_prompt}
    \vspace{-0.5cm}
\end{figure}

\begin{figure}[t]
\small
\vspace{-0.5cm}
    \begin{algorithm}[H]
    \begin{algorithmic}[1]
        \Procedure{GetNextSketch}{$\ex^+, \sketch_f$}
        \Statex \Input{A set of positive examples $\ex^+$, and an optional infeasible sketch $\sketch_f$.}
        \Statex \Output{A new sketch that has not been generated so far.}
        \State $\sketch_{All} \assign \emptyset$
        \While{${\tt True}$}
            \If{$\sketch_f \equiv \bot$}
            \State $\sketch \assign {\tt GetSketch}(\ex^+)$;
            \If{$\sketch \not\in \sketch_{All}$} $\sketch_{All} \assign \sketch_{All} \cup \{\sketch\}$; ${\bf yield} \ \sketch$;
            \EndIf
            \Else
            \While{${\tt HasRepair}(\sketch_f, \ex^+)$}
            \State $\metasketch, \goal \assign \textsc{LocateError}(\sketch_f, \ex^+)$;
            \State $\sketch \assign \metasketch[\holesym_i \mapsto {\tt GetSketch}(\goal[\holesym_i]) \ | \ h_i \in \mathsf{MetaHoles}(\metasketch)]$
            \If{$\sketch \not\in \sketch_{All}$} $\sketch_{All} \assign \sketch_{All} \cup \{\sketch\}$; ${\bf yield} \ \sketch$;
            \EndIf
            \EndWhile
            \EndIf
        \EndWhile
        \State \Return $\bot$;
        \EndProcedure 
    \end{algorithmic} 
    \end{algorithm}
    \vspace{-0.7cm}
    \caption{Sketch generation procedure. ${\tt GetSketch}(\ex^+)$ prompts the neural model for a new sketch, as illustrated in Figure~\ref{fig:siren_prompt}.}
    \label{fig:sketch_gen}
    \vspace{-0.5cm}
\end{figure}

\begin{figure}
    \centering
    \includegraphics[scale=0.22, trim=0 20 0 400, clip]{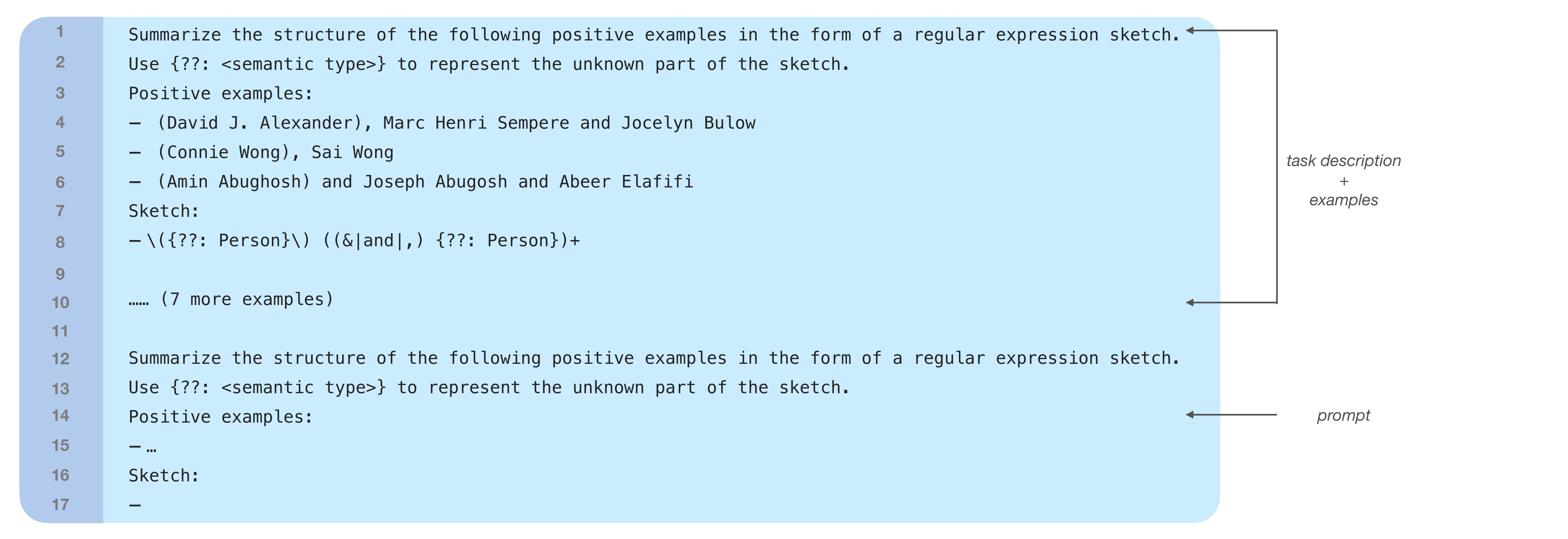}
    \vspace{-0.6cm}
    \caption{GPT-3 input structure for generating a sketch for the semantic string matching task.}
    \label{fig:siren_prompt}
    \vspace{-0.5cm}
\end{figure}

\subsubsection{Querying LLM for Sketches}
To obtain typed sketches, our approach prompts GPT-3 with suitable queries.\footnote{We consider sketches generated from models \texttt{text-davinci-003}, \texttt{code-davinci-002} and \texttt{gpt-3.5-turbo}.}  As shown in Figure~\ref{fig:sketch_gen}, the {\sc GetNextSketch} procedure takes as input positive examples $\mathcal{E}^+$ and an optional infeasible sketch $\sketch_f$, which is used in later iterations of the  algorithm for  sketch repair. 
Initially, the algorithm starts by querying GPT-3 for a sketch using the {\tt GetSketch} procedure, as illustrated in Figure~\ref{fig:siren_prompt}. The prompt to GPT-3 contains a task description, a manually-curated set of representative examples (in the form of a query and its desired output), and, finally, the prompt itself (lines 12--17 in Figure~\ref{fig:siren_prompt}). The {\tt GetSketch} procedure then attempts to parse the model's output into a typed sketch; however, there is no guarantee that the GPT-3 output will belong to our sketch grammar. Hence, if parsing fails, the {\tt GetSketch} procedure keeps prompting GPT-3 for a new sketch until the model's output is parseable.\footnote{Past work has explored few-shot semantic parsing from natural language into DSLs using structured natural language as an intermediate representation \cite{shin-van-durme-2022-shot}; however, more recent work has shown that LLMs can do well at this task without such guidance, even in the presence of adversarial perturbations \cite{zhuo2023robustness}.}

In future invocations of  \textsc{GetNextSketch}, this procedure may be invoked with an infeasible sketch $\sketch_f$ that needs to be repaired. Lines 8--11 of Figure~\ref{fig:sketch_gen} deal with this sketch repair aspect of the algorithm. Specifically, given the infeasible sketch $\sketch_f$ and positive examples $\mathcal{E}^+$, {\sc LocateError} produces a \emph{repair specification}, which consists of a so-called \emph{meta-sketch} $\metasketch$ and a specification $\goal$. A meta-sketch is like a sketch except that it contains \emph{untyped} ``meta-holes'' that need to be instantiated with a \emph{typed sketch}. The specification $\goal$ maps each meta-hole in $\metasketch$ to a set of positive examples.  Such a meta-sketch is instantiated into a regular sketch by querying GPT-3 via the {\tt GetSketch} procedure for each of the meta-holes $h_i$ in $\metasketch$ and its corresponding examples $\Psi[h_i]$.

Finally, we turn our attention to the {\sc LocateError} procedure, which is presented as inference rules in Figure~\ref{fig:sketch_repair}. These rules  derive judgments of the following shape:
\[
\ex^+ \vdash \sketch \hookrightarrow \metasketch, \goal
\]
meaning that $(\metasketch, \goal)$ is a repair specification for infeasible sketch $\sketch$ and examples $\mathcal{E}^+$.
The fault localization rules in Figure~\ref{fig:sketch_repair}  largely resemble \textsc{GetNextDecomp} for performing decomposition in that they use over-approximations. We explain these rules in more detail below.

\begin{figure}
    \centering
    \vspace{-0.5cm}
    \footnotesize
    \begin{mathpar}
        \inferrule*[leftstyle={\footnotesize \sc}, Left=Sketch-Single-Fail]{ {\tt Match}({\tt OverApprox}(\sketch), \ex^+) \equiv \emptyset \\\\
        \sketch = f(\sketch_1, \ldots, \sketch_n) \\\\
        \exists i \in \{1, \ldots, n\} \ \ \  \regex_i^\star = {\tt OverApprox}(\sketch[\hole/\sketch_i]) \\\\
        (\ex^+_1, \ldots, \ex^+_i, \ldots, \ex^+_n) \in {\tt Match}(\regex_i^\star, \ex^+) \\\\
        \ex^+_i \vdash \sketch_i \hookrightarrow \metasketch_i, \goal_i
        }{\ex^+ \vdash \sketch \hookrightarrow \sketch[\metasketch_i/\sketch_i], \goal_i} \\
        \inferrule*[leftstyle={\footnotesize \sc}, Left=Sketch-Multi-Fail]{ {\tt Match}({\tt OverApprox}(\sketch), \ex^+) \equiv \emptyset \\\\
        \sketch = f(\sketch_1, \ldots, \sketch_n) \\\\
        \forall i \in \{1, \ldots, n\} \ \ \  \regex_i^\star = {\tt OverApprox}(\sketch[\hole/\sketch_i]) \\\\
        {\tt Match}(\regex_i^\star, \ex^+) \equiv \emptyset
        }{\ex^+ \vdash \sketch \hookrightarrow \hole, [\hole \mapsto \ex^+]}\\
         \inferrule*[leftstyle={\footnotesize \sc}, Left=Sketch-Nested-Fail]{
        \sketch = f(\sketch_1, \ldots, \sketch_n) \\\\
        (\ex^+_1, \ldots, \ex^+_n) \in {\tt Match}({\tt OverApprox}(\sketch), \ex^+) \\\\
        \ex^+_i \vdash \sketch_i \hookrightarrow \metasketch_i, \goal_i \ \ \ i \in \{1, \ldots, n\}
        }{\ex^+ \vdash \sketch \hookrightarrow f(\metasketch_1, \ldots, \metasketch_n), {\tt Merge}(\goal_1, \ldots, \goal_n)}\\
        \inferrule*[leftstyle={\footnotesize \sc}, Left=Hole-Fail]{\exists e \in \ex^+. \ {\tt SemanticType}(e) \not\subtype \tau}{\ex^+ \vdash \thole{\type} \hookrightarrow \hole, [\hole \mapsto \ex^+]}\and\and
        \inferrule*[leftstyle={\footnotesize \sc}, Left=Hole-Correct]{\forall e \in \ex^+. \ {\tt SemanticType}(e) \subtype \tau}{\ex^+ \vdash \thole{\type} \hookrightarrow \thole{\type}, \emptyset}\\
        \inferrule*[leftstyle={\footnotesize \sc}, Left=Concrete-Fail]{{\tt Match}(r, \ex^+) \equiv \emptyset}{\ex^+ \vdash r \hookrightarrow \hole, [\hole \mapsto \ex^+]}\and\and
        \inferrule*[leftstyle={\footnotesize \sc}, Left=Concrete-Correct]{{\tt Match}(r, \ex^+) \neq \emptyset}{\ex^+ \vdash r \hookrightarrow r, \emptyset}     
    \end{mathpar}
    \caption{Procedure for \textsc{LocateError}.}
    \label{fig:sketch_repair}
    \vspace{-0.5cm}
\end{figure}

\paragraph{{ \sc \bf Sketch-Single-Fail.}} This rule applies to a sketch $\sketch$ of the form $f(\sketch_1, \ldots, \sketch_n)$ where (1) there is at least one positive example that is not matched by the over-approximation of $\sketch$ (premise on the first line) and (2) where only one of the sub-sketches $\sketch_i$ is faulty. To determine whether condition (2) holds, this rule replaces the entire sub-sketch $S_i$ with a single hole and then checks whether the over-approximation of the resulting sketch can accept all positive examples. If so, it recursively performs fault localization on $S_i$ and returns a meta-sketch by replacing $S_i$ in $S$ with its corresponding meta-sketch $\metasketch_i$. 

\paragraph{{ \sc \bf Sketch-Multi-Fail.}} This rule is similar to the first one except that it deals with the scenario where there are multiple faulty sub-sketches. That is, even after we replace any individual sub-sketch with a hole, there is \emph{still} at least one positive example that is not matched by the over-approximation. In this case, we generate a meta-sketch that consists of a single hole.

\paragraph{{ \sc \bf Sketch-Nested-Fail.}} This rule also applies to a sketch $\sketch$ of the form $f(\sketch_1, \ldots, \sketch_n)$ but considers the case where the over-approximation of $\sketch$ matches all the positive examples. However, as the sketch is infeasible, there must nonetheless be at least one problem inside the next sub-sketches. Hence, our fault localization technique recursively localizes the error in the sub-sketches and returns the merged result.

\paragraph{{{ \sc \bf Hole-Fail.}}} This rule applies to the case where the type of a hole is incorrect in that its annotated type is inconsistent with at least one of the positive examples. In this case, our algorithm generates a meta-sketch by erasing the type annotation of this hole. 

\paragraph{{{ \sc \bf Hole-Correct, Concrete-Correct.}}} Since these rules apply to base cases without any problems, fault localization returns the original sketch.

\paragraph{{{ \sc \bf Concrete-Fail.}}} This rule applies to the case where a concrete regex does not match at least one of the examples. In this case, we simply replace the concrete regex with a meta-hole.

\begin{example}
    Consider the positive examples from Section~\ref{sec:overview} and the following sketch:
    \[\small \thole{Name}\cdot ``, \ " \cdot\thole{Country}\cdot ``, \ " \cdot\thole{Year}  \]

    Suppose the synthesizer concluded this sketch to be infeasible since the string ``1898-1985'' cannot be identified as a year and sends this as a failed sketch to the sketch generator. To repair this sketch, we follow the \textsc{Sketch-Nested-Fail} rule to recursively traverse through each part of the sketch until we locate the faulty hole, $\thole{Year}$. We then gather the positive examples that should be matched by this hole, which are ``1898-1985'', ``1701-1779'' and ``1815-1879'', and replace the faulty typed hole with a new hole with no type (rule \textsc{Hole-Fail}). With the generated repair specification, we query GPT-3 to generate a new sketch for the faulty hole, and it returns a new sketch $\thole{Year} \cdot ``-" \cdot \thole{Year}$.
    
\end{example}

\section{Implementation}

We have implemented our synthesis algorithm in a new tool called \toolname written in Python. In this section, we provide implementation details about different components of \toolname.

\paragraph{{\bf Implementation of the semantic matching construct}} Our tool heavily relies on the use of GPT-3 to identify the semantic meanings of strings.\footnote{We use the \texttt{text-davinci-003} model.}
Our few-shot prompt (following the discussion in Section~\ref{sec:sketch_gen}) to accomplish this is shown in Figure~\ref{fig:semantic_prompt}. The input begins with a task description that asks the model to identify \emph{all possible substrings} of a particular semantic type, and we instruct the model to return ``none'' if it does not find any. Following the task descriptions, we provide 8 examples,\footnote{We provide all the in-context examples we use in the supplementary material.} each of which shows the structure of a query: the first line provides the string of interest, and the second line specifies the semantic type of interest. Furthermore, we provide sample outputs for each example in the expected output format. 

\begin{figure}
    \centering
    \vspace{-0.5cm}
    \includegraphics[scale=0.25, trim=0 20 400 500, clip]{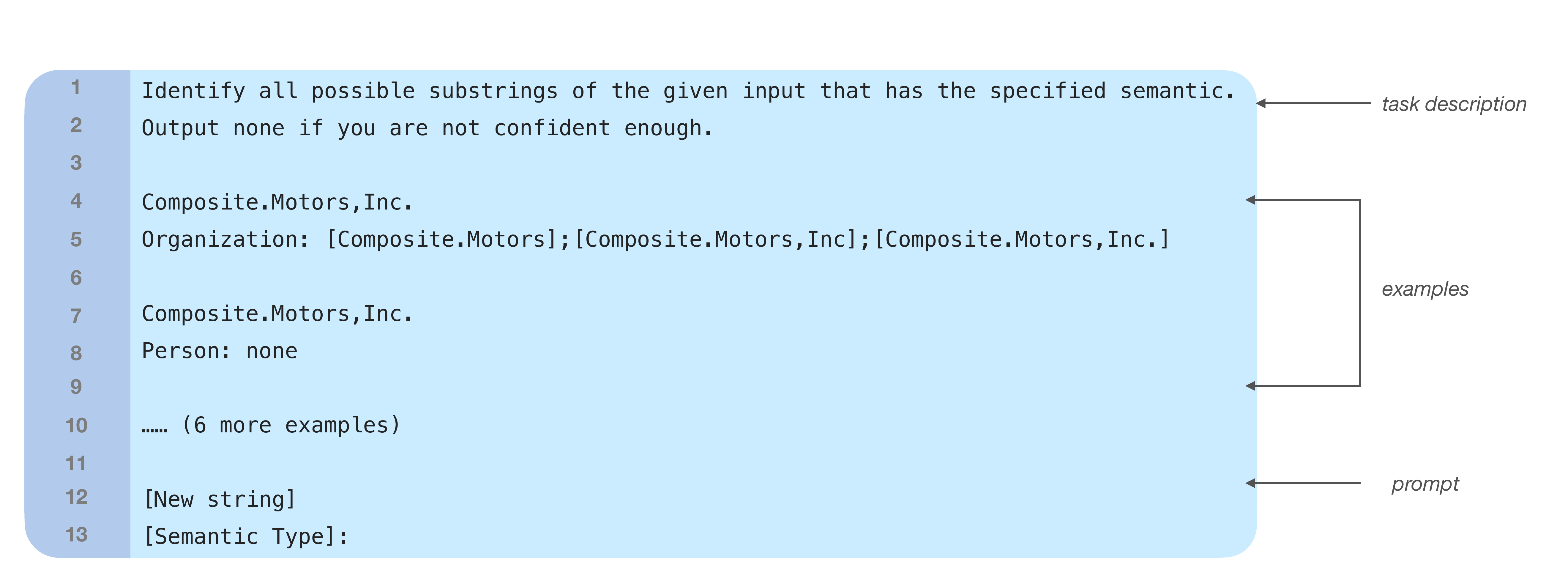}
    \caption{GPT-3 input structure for identifying substring of specific semantics. [New string] is a placeholder for the string we are querying about, and [Semantic Type] is the semantics we are asking the model to identify.}
    \label{fig:semantic_prompt}
    \vspace{-0.5cm}
\end{figure}

\paragraph{{\bf Implementation of checking observational equivalence}} In the \textsc{GetNextCompletion} procedure (Figure~\ref{fig:hole_synthesis}), we use the set $\ex_{\star}$ to prune out programs that are observationally equivalent to previously synthesized programs.  In Figure~\ref{fig:goal_synthesis}, $\ex_{\star}$ corresponds to all substrings of the negative examples $\ex^-$, but this set might contain too many strings in practice, leading to considerable overhead in the observational equivalence check. To address this issue, we only obtain the substring of the negative examples that are relevant to the specific hole under consideration. Specifically, we identify the relevant substrings of the negative examples using the overapproximation of the sketch. If a negative example can already be rejected by the overapproximation of the sketch, it is safe to conclude that any instantiation of the sketch can reject this negative example and therefore that this example is irrelavant. For those negative examples that can be matched by the overapproximation, we identify substrings that might be matched by each hole of the sketch and use those to check observational equivalence. This strategy provides the full benefits of checking observational equivalence but significantly reduces overhead in some cases. We illustrate this discussion through the following example:

\begin{example}
Consider a synthesis task with the following positive and negative examples:
\begin{center}
\footnotesize
\begin{tabular}{cc}
    \toprule
     {\bf Positive Examples} & {\bf Negative Examples}  \\
     \midrule
     14+15 &  1+18 \\ 
     15+17 & 2+6 \\
     16+13 & 7-12 \\
     \bottomrule
\end{tabular} 
\end{center}
Suppose that the generated sketch is $\{\hole: {\sf Integer}\}[+]\{\hole: {\sf Integer}\}$. Using the overapproximation $(.*)[+](.*)$, we can already reject the negative example ``7-12'', so that negative example is not relevant for selecting different instantiations of the sketch. To find the rest of the relevant strings, notice that the overapproximation decomposes the first negative example by sending ``1'' to the first hole and ``18'' to the second hole (as the negative example for each of the holes). Following the same procedure, we obtain ``1'' and ``2'' as the relevant substring for the first hole. Now, considering the two synthesized programs $\{v: {\sf Integer} \ | \ x > 4\}$ and $\{v: {\sf Integer} \ | \ x > 5\}$ for the first hole, we can safely conclude that these two programs are observationally equivalent with respect to the negative examples since both programs reject the same set of negative examples (specifically, example ``1+18'' and ``2+6'').
\end{example}

\paragraph{{\bf Ranking heuristic}} Because there are often multiple semantic regexes that are consistent with the provided examples, it is important to use a ranking heuristic to choose between possible solutions. To this end, our method prioritizes  sketches that maximize the number of type annotations, and it prefers decompositions  that minimize the number of holes that are assigned empty strings as positive examples. Finally, when choosing between multiple regexes for a given hole, our algorithm prefers those with smaller ASTs, first ranked by height and then by the number of nodes.

\paragraph{{\bf Hyperparameters}} The \toolname system has a hyperparameter that controls the maximum depth of the synthesized programs for each hole, which is set to 4 by default.  For GPT-3 hyperparameters, we set the temperature to 0 (corresponding to greedy inference) and maximum length to 256.\footnote{{We also define the suitable stop sequences for each prompt to ensure GPT-3 doesn't have to generate 256 tokens.}}
\section{Evaluation}
In this section, we describe the results of our experimental evaluation, which is designed to answer the following research questions:

\begin{itemize}[leftmargin=*]
    \item {\bf RQ1.}  How does our proposed data extraction approach compare against existing approaches?
    \item {\bf RQ2.} How does our synthesis algorithm compare to relevant baselines?
    \item {\bf RQ3.} How important are the different components of our synthesis algorithm for successfully solving these benchmarks?
    \item {\bf RQ4.} Do semantic regexes help humans more effectively solve data extraction tasks compared to standard regexes?
    \end{itemize}

\begin{table}[t!]
    \centering
    \caption{Description of the sample tasks used in the evaluation.}\label{tab:benchmarks_main}
    \footnotesize
    \vspace{-0.5cm}
    \begin{tabular}{cc}
    \toprule
    {\bf Domain} & {\bf Task Description} \\
    \midrule
    \multirow{2}{*}[2pt]{Business} & 
    {Restaurants that are created before 2000 or after 2010}  \\
    & {Businesses located in California}  \\
    \midrule
    \multirow{2}{*}{Sales} & {Products with Intel CPU that have more than 8GB memory}  \\
    & {TVs of  size less than 50' or resolution less than 1080P}  \\
    \midrule
    \multirow{2}{*}{Retail} & {Website titles that start with product names and are followed by a url}  \\
    & {Product names that contain measurement information}  \\
    \midrule
    \multirow{2}{*}{Marketing} & {Software engineering jobs that have specified working locations}  \\
    & {Business names with at least 3 words}  \\
    \midrule
    \multirow{2}{*}{Account}  & {Email addresses that have  a country domain and where the  username ends with number}  \\
    & {Software versions with at least 10 minor updates and more than one patch}  \\
    \midrule
    \multirow{2}{*}{Stock} & {Company names with 3-letter abbreviation}  \\
    & {Company names with ticker symbols containing special characters}  \\
    \midrule
    \multirow{2}{*}{Science} & {Location description with format State; County; More details}  \\
    & {Locations that are less than  11 miles from a road}  \\
    \midrule
    \multirow{2}{*}{Server} & {Apache logs with file id >=151000 or in the format of a zip file with id <= 50}  \\
    & {Photo files with numbers in their name}  \\
    \midrule
    \multirow{2}{*}{Museum} & {Purchase made by using three different funds}  \\
    & {Artwork with two artists born in the 14th century}  \\
    \midrule
    \multirow{2}{*}{Exhibition} & {Dimension of item between 10 and 50 inches }  \\
    & {Item that is associated with at least three categories}  \\
    \bottomrule
    \end{tabular}
\end{table}

\paragraph{{\bf Benchmarks}} To answer these questions, we evaluate \toolname on 50 data extraction tasks involving 10 different datasets, which cover a wide range of domains like sales, science, and art. These datasets contain many different string formats and involve a large variety of entities. Out of 50 tasks, 34 of the tasks require at least one built-in semantic type and 33 of the tasks require at least one custom semantic type. We consider an average of 5 data extraction tasks for each dataset and manually label a subset of the strings in each dataset as positive or negative {for each task}. Specifically, we use 6 of the manually labeled examples for training and the rest for testing.  Table~\ref{tab:benchmarks_main} describes some example tasks for each domain. 


\paragraph{{\bf Experimental Setup}} All of our experiments are conducted on a machine with an Apple M2 Max CPU and 32GB of physical memory, running the macOS 13.2.1 operating system. We run GPT-3 through the OpenAI API. For each task, we set the timeout to 60 seconds (excluding the time to query OpenAI). 

\subsection{Comparison with Other Automated Data Extraction Techniques}

There are several techniques that can be used to automate data extraction tasks. To answer our first research question, we compare \toolname against the following alternative data extraction approaches:


\begin{itemize}[leftmargin=*]
    \item \textsc{ChatGPT-Regex-Synth}~\cite{chatgpt}: One way to automate data extraction is to synthesize standard regexes from positive and negative examples.  To evaluate this approach, we use ChatGPT to synthesize  standard regexes. If the synthesized regex  rejects the positive examples or accepts the negative examples, we ask ChatGPT to synthesize a different regex for up to ten iterations.\footnote{We set the temperature to $0.7$ for sampling.}
    \item \textsc{ChatGPT-Exec}~\cite{chatgpt}: Another way to automate data extraction  is to directly use ChatGPT. To evaluate this approach, we provide ChatGPT with positive and negative examples and then query it about strings in the test set. Hence, this approach  does not require synthesizing a program; {instead, it invokes ChatGPT on every test example}. 
    \item \textsc{FlashGPT}~\cite{flashgpt}: Recent work has proposed an extension of FlashFill, called FlashGPT, that can query GPT-3 in addition to performing syntactic transformations and pattern matching. For our third baseline, we also compare against FlashGPT by giving it positive and negative examples and then using it to synthesize a program in their DSL. 
    
\end{itemize}

\begin{table}[t]
    \centering
    \footnotesize
    \caption{Evaluation results for \toolname and data extraction baselines. P means precision and R means recall.}\label{tab:eval1}
    \vspace{-0.5cm}
    \begin{tabular}{ccccccc}
    \toprule
    {\bf Tool} & {\bf \# Finished} & {\bf P } & {\bf R} & $\mathbf{F_1}$ & {\bf Synth Time (s)} & {\bf Matching Engine} \\
    \midrule
    \textsc{ChatGPT-Regex-Synth} & 23/50 & 0.60 & 0.40 & 0.44 & - & Regex\\
    \textsc{ChatGPT-Exec} & - & 0.60 & 0.77 & 0.65 & - & ChatGPT \\
    \textsc{FlashGPT} & 15/50 & 0.45 & 0.83 & 0.58 & 3.16 & FlashGPT DSL \\
    \midrule
    \toolname & 48/50 & 0.94 & 0.84 & 0.87 & 4.96 & Semantic Regex \\
    \bottomrule
    \end{tabular}
    \vspace{-0.5cm}
\end{table}

\paragraph{Main results} Our main results are summarized in Table~\ref{tab:eval1}. We evaluate each tool in terms of precision, recall, and F1 score on the test set as well as synthesis time and number of benchmarks solved. The {\bf P}, {\bf R}, and $\mathbf{F_1}$ columns represent the precision, recall, and $F_1$ score on the test set. \toolname achieves the highest precision, recall, and $F_1$ score among all the alternative data extraction approaches. In particular, \toolname outperforms the second best approach, namely \textsc{ChatGPT-Exec}, by 22\% in terms of $F_1$ score. While \textsc{ChatGPT-Exec} and {\sc FlashGPT} have fairly high recall, they have low precision. {\sc ChatGPT-Regex-Synth} has similar precision to {\sc ChatGPT-Exec} but has very low recall on the test set. Finally, {\sc FlashGPT} and \toolname are close in terms of recall, but \toolname significantly outperforms {\sc FlashGPT} in terms of precision (for benchmarks that both tools  can synthesize within the time limit).

Next, the column labeled ``\# Finished'' in Table~\ref{tab:eval1} shows the number of  tasks that each tool is able to solve. For \toolname and {\sc FlashGPT}, solving a benchmark means  they were able to find a program consistent with the positive and negative examples within the 60-second time limit. Solving a benchmark for {\sc ChatGPT-Regex-Synth} means finding a regex consistent with the examples within 10 iterations.\footnote{Recall we keep querying for a different regex for up to 10 times if the synthesized regex does not match the examples.} Since {\sc ChatGPT-Exec} does not perform synthesis, this column is not applicable to it. 
Among all the synthesis-based approaches, \toolname terminates for 48 out of 50 tasks, which is around twice as many as \textsc{ChatGPT-Regex-Synth} and around 3 times as many as \textsc{FlashGPT}. 

Finally, the column labeled ``Synth time'' shows the synthesis time in seconds for {\sc FlashGPT} and \toolname. Since we exclude the time to query OpenAI from synthesis time (this only takes at most a few seconds), this column is not applicable to {\sc ChatGPT-Regex-Synth}. As we can see from this column,  the synthesis time of \toolname is around 5 seconds, so it takes slightly longer than \textsc{FlashGPT} (which takes around 3 seconds) for the 14 tasks that both of the tools can solve. However, \toolname is able to synthesize a program for three times as many tasks as \textsc{FlashGPT}.

\paragraph{Failure Analysis for the baselines} To provide some insight into the shortcomings of existing approaches, we briefly discuss the failure cases of the  baselines.  As expected, \textsc{ChatGPT-Regex-Synth} struggles with tasks that are hard to represent as regular expressions, such as matching all businesses that are in California. Although \textsc{FlashGPT} combines neural and symbolic constructs, its neural component 
processes positive and negative examples rather than semantic types. In other words, the neural constructs directly query GPT with positive and negative examples rather than querying whether a  string matches a certain  type. As a result, it frequently generates trivial programs that directly invoke GPT with the training examples as input. Hence, it ultimately ends up sharing the same limitations as  \textsc{ChatGPT-Exec}.

\paragraph{Failure analysis for the \toolname} We examined instances where \toolname is unable  to complete the synthesis task within the allotted time and found that it encounters difficulties in tasks that demand a higher level of granularity from  semantic pattern matching. For example, consider a task that involves finding restaurant names containing a person's name. For the positive example ``Alice Chinese Bistro'', the entity matcher  may fail to recognize ``Alice'' as a person's name, causing \toolname to fail to synthesize a program consistent with all examples. 

\subsection{Comparison with Other Semantic Regex Synthesis Techniques}\label{sec:eval_comp}

\begin{table}[t]
    \centering
    \footnotesize
    \caption{Evaluation results for our tool and synthesis baselines. P means precision and R means recall. }\label{tab:eval2}
    \vspace{-0.5cm}
    \begin{tabular}{cccccc}
    \toprule
    {\bf Tool} & {\bf \# Finished} & {\bf P } & {\bf R} & $\mathbf{F_1}$ & {\bf Synth Time (s)} \\
    \midrule
    \textsc{ChatGPT-Synth}
    & 6/50 & 0.76 & 0.67 & 0.71 & - \\
    \toolname-\textsc{NoSketch} & 12/50 & 0.79 & 0.85 & 0.79 & 15.27 \\
    \midrule
    \toolname & 48/50 & 0.94 & 0.84 & 0.87 & 4.96 \\
    \bottomrule
    \end{tabular}
    \vspace{-0.5cm}
\end{table}

To answer our second research question, we compare the neural-guided synthesis algorithm of \toolname against the following two purely-neural or purely-symbolic baselines:

\begin{itemize}[leftmargin=*]
    \item \textsc{ChatGPT-Synth}~\cite{chatgpt}:  To evaluate whether a purely neural synthesizer can solve these benchmarks, we use  ChatGPT to create a synthesizer for semantic regexes.  Specifically, our \textsc{ChatGPT-Synth} baseline queries ChatGPT to synthesize a \emph{semantic regex} that matches all positive examples and rejects all negative examples. If the generated semantic regex is inconsistent with the examples, we query it again for a different one. We repeat this process for up to 10 times, as done with  our {\sc ChatGPT-Regex-Synth} baseline in the previous subsection.
    \item \textsc{\toolname-NoSketch}: To evaluate a semantic regex synthesis without neural sketch generation, we create a variant of \toolname that does not start with a sketch (i.e., it uses $\{\hole: \textsf{Any}\}$ as the sketch). 
\end{itemize}

The results of this comparison are presented in Table~\ref{tab:eval2}. As we can see from the ``\# Finished'' column, {\sc ChatGPT-Synth} can synthesize a semantic regex consistent with the examples for only 6 of the 50 benchmarks within 10 iterations. On the other hand, {\sc \toolname-NoSketch} times out on the majority of benchmarks and only finds a consistent regex for 12 of the 50 benchmarks. Furthermore, for semantic regexes that both  {\sc \toolname-NoSketch}  and \toolname can synthesize, \toolname is significantly faster. 
Table~\ref{tab:eval2} also shows that \toolname  outperforms both of these synthesizers in terms of $F_1$-score when evaluated on the test data. In particular, among all tasks that can be solved by both \toolname and \textsc{ChatGPT-Synth}, \toolname achieves an $F_1$ score of $0.94$ versus $0.71$, and, among tasks that can be solved by both \toolname and \textsc{\toolname-NoSketch}, \toolname achieves an $F_1$ score of $0.88$ versus $0.84$.

\subsection{Ablation Study}


In this section, we describe two ablation studies to assess the relative impact of different components of \toolname: one evaluates the impact of the synthesis techniques proposed in Section~\ref{sec:top-level}-\ref{sec:compsitional}, and the other one evaluates the impact of generating sketches rather than concrete regexes.

\paragraph{Ablations of components of the synthesis techniques} To evaluate the effectiveness of the proposed synthesis techniques, we consider the following ablations:

\begin{itemize}[leftmargin=*]
    \item \textsc{\toolname-NoDecomp}: A variant of \toolname that does not  perform compositional sketch completion. In particular, this variant does not infer positive examples for each hole. 
 \item \textsc{\toolname-NoTypedHole}: A variant of \toolname that does not use typed sketches. That is, each hole in the sketch is annotated with type \textsf{Any}. 
    \item \textsc{\toolname-NoLocateError}: A variant of \toolname that does not perform error localization for sketch repair. Instead, it queries GPT-3 for a new sketch through sampling. 

    \item \textsc{\toolname-NoTypeSystem}: A variant of \toolname that does not perform type-directed synthesis. 
\end{itemize}


\begin{figure}
    \centering
    \vspace{-0.5cm}
    \hspace{-1.5cm}
    \begin{minipage}[t]{0.5\textwidth}
    \definecolor{darkblue}{rgb}{0,0.24706,0.36078}
\definecolor{darkpurple}{rgb}{0.3451,0.31373,0.55294}
\definecolor{darkpink}{rgb}{0.73725,0.31373,0.56471}
\definecolor{orangered}{rgb}{1,0.38824,0.38039}
\definecolor{yellow}{rgb}{1,0.65098,0}

\begin{tikzpicture}[scale=0.6]
\begin{axis}[
    ymax=500,
    y=0.01cm,
    x=0.18cm,
    legend cell align = left,
    legend pos = outer north east,
    legend style = {
        nodes={scale=0.8, transform shape},
        at={(0.22,0.98)},
        legend columns=1,
        anchor=north,
    },
    xlabel style={yshift=1mm},
    ylabel = Cumulative Time (s),
    xlabel = \# Completed Benchmarks,
    xmax = 51,
    xmin = -4
]
    \legend{{\sc \toolname},{\sc \toolname-NoDecomp},{\sc \toolname-NoLocateError},{\sc \toolname-NoTypedHole},{\sc \toolname-NoTypeSystem}}
 \addplot[smooth, line width=0.4mm, mark=square, mark options={fill=yellow}, mark size=0.8pt, yellow] coordinates {
(1, 0.03965687752) 
(2, 0.1004858017) 
(3, 0.2213199139) 
(4, 0.3682119847) 
(5, 0.5169529916) 
(6, 0.6668088437) 
(7, 0.8301157952) 
(8, 1.017044544) 
(9, 1.256562472) 
(10, 1.58267951) 
(11, 1.921844483) 
(12, 2.270258427) 
(13, 2.6873312) 
(14, 3.160350084) 
(15, 3.729110241) 
(16, 4.350345373) 
(17, 4.974715233) 
(18, 5.631301165) 
(19, 6.398749352) 
(20, 7.247143269) 
(21, 8.133033991) 
(22, 9.064376831) 
(23, 10.07906079) 
(24, 11.20059371) 
(25, 12.45900154) 
(26, 13.73647952) 
(27, 15.16658664) 
(28, 17.2307446) 
(29, 19.30641842) 
(30, 21.5041275) 
(31, 23.73273444) 
(32, 26.3826716) 
(33, 29.37882757) 
(34, 32.64707065) 
(35, 36.1312778) 
(36, 40.68897176) 
(37, 45.93495774) 
(38, 53.03780294) 
(39, 60.8009572) 
(40, 69.01777721) 
(41, 77.88430429) 
(42, 86.90586329) 
(43, 99.96785427) 
(44, 113.3079002) 
(45, 127.5502944) 
(46, 146.8048055) 
(47, 190.0116797) 
(48, 238.0030785) 
};
\addplot[smooth, line width=0.4mm, mark=triangle, mark options={fill=darkblue}, mark size=0.8pt, darkblue] coordinates {
(1, 0.1447861195) 
(2, 0.5516242981) 
(3, 2.512999296) 
(4, 4.757796287) 
(5, 7.387031555) 
(6, 25.89839745) 
(7, 49.61691642) 
(8, 76.09498739) 
(9, 126.4115302) 
(10, 181.0694923) 
};
\addplot[smooth, line width=0.4mm, mark=x, mark options={fill=darkpink}, mark size=0.8pt, darkpink] coordinates {
(1, 0.04427695274) 
(2, 0.1190478802) 
(3, 0.2663710118) 
(4, 0.4342792035) 
(5, 0.6286289693) 
(6, 0.8342490197) 
(7, 1.085031986) 
(8, 1.379728079) 
(9, 2.060221195) 
(10, 2.8637712) 
(11, 3.836070061) 
(12, 4.810188055) 
(13, 5.864868164) 
(14, 7.895837307) 
(15, 10.11360312) 
(16, 12.67210507) 
(17, 15.51138806) 
(18, 19.08019018) 
(19, 22.89815497) 
(20, 27.72394013) 
(21, 33.01735807) 
(22, 39.22887921) 
(23, 48.10181832) 
(24, 57.84739232) 
(25, 97.79206943) 
(26, 148.2325003) 
(27, 200.0532043) 
(28, 252.257539) 
};
\addplot[smooth, line width=0.4mm, mark=diamond, mark options={fill=orangered}, mark size=0.8pt, orangered] coordinates {
(1, 1.050572157) 
(2, 2.43678689) 
(3, 4.004702806) 
(4, 5.968386888) 
(5, 7.977417945) 
(6, 10.04178882) 
(7, 12.80834174) 
(8, 16.07967591) 
(9, 19.37424803) 
(10, 22.89874911) 
(11, 27.4166522) 
(12, 32.26391625) 
(13, 37.17088222) 
(14, 43.10218406) 
(15, 49.10283208) 
(16, 55.31619596) 
(17, 61.61715293) 
(18, 68.25550985) 
(19, 75.18367195) 
(20, 82.74151587) 
(21, 91.74118376) 
(22, 100.9624867) 
(23, 111.8156557) 
(24, 123.0637774) 
(25, 134.5358362) 
(26, 153.0063622) 
(27, 176.3082881) 
(28, 200.2615402) 
(29, 230.5961502) 
(30, 265.541584) 
(31, 307.1880932) 
(32, 357.961987) 
(33, 412.6069279) 
};
\addplot[smooth, line width=0.4mm, mark=pentagon, mark options={fill=darkpurple}, mark size=0.8pt, darkpurple] coordinates {
(1, 0.06372499466) 
(2, 0.1382350922) 
(3, 0.217040062) 
(4, 0.3390021325) 
(5, 0.4766449929) 
(6, 0.6230509282) 
(7, 0.7910897733) 
(8, 1.020676613) 
(9, 1.335031509) 
(10, 1.786788702) 
(11, 2.360619545) 
(12, 3.009411573) 
(13, 3.679698467) 
(14, 4.70019269) 
(15, 6.001746893) 
(16, 17.74664283) 
(17, 34.43334985) 
(18, 54.73396993) 
(19, 82.25398994) 
(20, 136.1226869) 
(21, 191.1110468) 
};
\end{axis}\end{tikzpicture}
    \vspace{-0.8cm}
    \caption{Solved tasks over time.}
    \label{fig:eval3}
    \end{minipage}
    \hspace*{0.3cm}
    \begin{minipage}[t]{0.43\textwidth}
    \centering
    \vspace*{-3.4cm}
    \scriptsize
    \begin{tabular}{cccc}
    \toprule
    Task & Manual-Regex & Manual-SemRegex & \toolname \\ 
    \midrule 
     1 & 0.31 & 0.93 & 1.00 \\
     2 & 0.65 & 0.65 & 0.89 \\
     3 & 0.55 & 0.81 & 0.89 \\
     4 & 0.63 & 0.71 & 0.89 \\
     \midrule
     Average & 0.54 & 0.78 & 0.92 \\
     \bottomrule
    \end{tabular}
    \vspace{-0.1in}
    \caption{Average $F_1$ scores achieved by manually-written regexes,  semantic regexes, and synthesized semantic regexes.}
    \label{tab:user_study_res}
    \end{minipage}
    \vspace{-0.6cm}
\end{figure}

The results of this ablation study are presented in Figure~\ref{fig:eval3}, which shows the number of benchmarks completed (x-axis) within the given time limit (y-axis). As we can see from the gap between the five lines, \toolname is significantly faster than all other variants and achieves a speedup of 14$\times$  compared to the second-fastest baseline, \textsc{\toolname-NoTypedHole}. Hence, this ablation study shows that all algorithmic components proposed in this paper are important for speeding up the synthesis.

\paragraph{Ablations of sketch generations.} To understand the significance of generating sketches as opposed to concrete semantic regexes, we introduce a new baseline named \textsc{ChatGPT-Synth-Repair}. This baseline extends the \textsc{ChatGPT-Synth} baseline from Section~\ref{sec:eval_comp} with program repair. Specifically, it first generates a concrete program using ChatGPT (using a similar prompt as the \textsc{ChatGPT-Synth} baseline). If the generated program does not satisfy all the positive and negative examples provided, it then performs the error localization and repair strategies presented in Section~\ref{sec:sketch_gen}.

\begin{table}[t]
    \centering
    \footnotesize
    \caption{Evaluation results for our tool and the no-sketch variant. P means precision and R means recall. }\label{tab:eval3}
    \vspace{-0.5cm}
    \begin{tabular}{cccccc}
    \toprule
    {\bf Tool} & {\bf \# Finished} & {\bf P } & {\bf R} & $\mathbf{F_1}$ & {\bf Synth Time (s)} \\
    \midrule
    \textsc{ChatGPT-Synth}
    & 6/50 & 0.76 & 0.67 & 0.71 & - \\
    \textsc{ChatGPT-Synth-Repair} & 12/50 & 0.76 & 0.67 & 0.71 & - \\
    \midrule
    \toolname & 48/50 & 0.94 & 0.84 & 0.87 & 4.96 \\
    \bottomrule
    \end{tabular}
\end{table}

The results of this ablation study are presented in Table~\ref{tab:eval3}. For clarity, we also include the results of \textsc{ChatGPT-Synth} and \toolname from Section~\ref{sec:eval_comp} to show the difference evaluation results. This ablation leads to the two following observations: 
\begin{itemize}[leftmargin=*]
    \item \textsc{ChatGPT-Synth-Repair} solves 6 more benchmarks compared to \textsc{ChatGPT-Synth}. This shows that our sketch repair technique can also be generalized to concrete program repair. 
    \item Although \textsc{ChatGPT-Synth-Repair} exhibits superior performance over \textsc{ChatGPT-Synth}, it is not comparable to the performance of \toolname, which leverages sketches for synthesis. This underscores the pivotal role sketches play in enhancing the tool's efficacy. Upon analysis, we find that \textsc{ChatGPT-Synth-Repair} is able to accurately locate the error when it does not generate the desired program. However, ChatGPT struggles to generate a new program that precisely separates positive from negative examples. In contrast, with \toolname, since we produce sketches, ChatGPT only needs to generate segments of the program it is confident about, delegating the uncertain parts or those demanding intricate reasoning to the program synthesizer.
\end{itemize}

\subsection{User Study}\label{sec:user_study}
We conducted a user study to assess the efficacy of semantic regexes in aiding humans with data extraction tasks compared to standard regexes. We recruited 13 participants, consisting of 3 CS undergraduate students, 6 CS graduate students, and 4 professional software engineers who regularly use regexes in their work. We asked each participant to complete 4 data extraction tasks by writing a regex. 
The participants were given 5 minutes for each task and asked to write standard regexes for two randomly chosen tasks (out of the 4 total tasks)  and semantic regexes for the other two. The four tasks used in the study are simplified versions of the benchmarks used in our evaluation --- we intentionally simplified the tasks so that they are doable within 5 minutes.


\paragraph{Setup} To conduct this user study, we developed a command-line interface for \toolname. For each task, the interface initially displays the prompt for the task (including 3 positive and negative examples) and then asks the user to input their answer. The tool randomly determines whether the answer should be a standard or semantic regex and only accepts user answers in the correct format. 
Upon entering a regex, the interface evaluates it against the test set and informs the user of their regex's performance, allowing unlimited attempts to enter a new regex within the 5-minute time limit. The details of the user study protocol are provided in the supplementary material. 


\paragraph{{\bf Results}} We evaluate the quality of the regexes in terms of their $F_1$ score on the test set. For each task, Table~\ref{tab:user_study_res} presents $F_1$ scores for (a) manually-written standard  regexes  (``Manual-Regex''), (b) manually-written \emph{semantic} regexes (``Manual-SemRegex''), and (c) semantic regexes generated automatically by \toolname (the ``\toolname'' column). Since some of the manually-written regexes have a precision or recall score of $0$, the $F_1$ score is undefined. In Table~\ref{tab:user_study_res}, we only show average $F_1$ score across regexes for which the $F_1$ score is defined.

As we can see from Figure~\ref{tab:user_study_res}, manually-written \emph{semantic} regexes achieve a better overall $F_1$ score (0.78) compared to standard regexes, for which the $F_1$ score is $0.54$. We ran a two-way ANOVA to find the most significant factor affecting the $F_1$ score. In particular, we model the $F_1$ score as the dependent variable and the type of tool and task as independent variables. The ANOVA analysis shows that the ``task'' variable has a high p-value of 0.57, which indicates it does not have a significant impact on the $F_1$ score. On the other hand, the ``type of tool'' variable has a low p-value of 0.003, suggesting that the type of tool used has a significant impact on user performance. The analysis result indicates that participants are more effective at performing these types of data extraction tasks using semantic regexes than with standard regexes. Another interesting aspect of Figure~\ref{tab:user_study_res}  is that the semantic regexes learned by \toolname seem to be \emph{even} more effective than manually-written semantic regexes. In particular, for these four tasks, \toolname learns regexes that achieve an overall $F_1$ score of $0.92$ compared to the  $F_1$ score ($0.78$) of manually-written semantic regexes. This result suggests that our proposed learning technique has the potential to improve productivity even for expert users who are generally comfortable with writing regexes. 

\section{Related Work}

In this section, we survey related work on program synthesis and data extraction. 

\paragraph{{\bf Learning regexes from examples}} There is a large body of prior research on learning regular expressions from positive and negative examples~\cite{alquezar94, firoiu98, angluin1987, gold1978, rivest1989, DFA1, DFA2}. Our work builds on existing works that prune partial programs by evaluating the examples with respect to over- and under-approximations~\cite{alpharegex, regel, opsynth}. In this work, we not only use the over-approximations for pruning but also for decomposing the synthesis tasks.

\paragraph{{\bf Information Extraction from Semi-Structured Data}} Past work has investigated similar extraction tasks, particularly for extracting lists from web sources~\cite{pasupat-liang-2014-zero, webqa, hyb, freedom}, answering questions based on tables~\cite{pasupat-liang-2015-compositional}, and general information extraction from tabular data~\cite{fonduer, le14}. Recent work has specifically employed LLMs to extract information from tables~\cite{binder} or raw text \cite{dunn2022structured}. Despite the prevalence of neural-based approaches that emphasize data semantics, our work uniquely targets the integration of both semantic and symbolic aspects of the data structure.

\paragraph{{\bf Neurosymbolic DSLs}} Recent work has considered so-called \emph{neurosymbolic DSLs} with both standard language constructs and neural components~\cite{nmn, nmn2, neuralregex, houdini, ntpt, near, webqa, Bastani2022, referexpression, flashgpt, binder}. Among these, most relevant to our approach are FlashGPT~\cite{flashgpt} and Binder~\cite{binder}. FlashGPT augments the DSL used in Flashfill~\cite{flashfill} with  semantic transformation operators that can be used to reason about  the semantic properties of the input. However, FlashGPT relies on in-context examples and does not utilize explicit semantic types, which hinders its ability to reason about combined semantic and symbolic properties. On the other hand, \textsc{Binder}~\cite{binder} proposes a new program structure that extends programming languages, such as SQL, with a function that allows querying large language models (in particular, Codex). However, the constructs proposed in \textsc{Binder} focus mainly on SQL-related tasks and do not transfer well to the string-matching domain. 

\paragraph{{\bf Program Synthesis Using LLMs}} The growing interest in leveraging LLMs for program synthesis~\cite{binder, codegen, programsynthesisllm, codexpaper,docprompting} stems from general-purpose models like ChatGPT and Codex demonstrating code generation capabilities from various specifications, including natural language and input-output examples. However, these models often generate code that violates syntactic and semantic rules due to their limited understanding of program syntax and semantics. To address this, several approaches~\cite{jigsaw, marriage, poesia2022synchromesh} integrate LLMs with symbolic methods like program analysis  to improve code quality. In our work, we use LLMs to generate sketches and introduce a sketch repair technique to handle cases where the LLM fails to generate accurate sketches.

\paragraph{{\bf Compositional program synthesis}} Various approaches have been proposed for compositional program synthesis~\cite{bansal22,huang20,lambda2,flashmeta}. Among these works, both $\lambda^2$~\cite{lambda2} and FlashMeta~\cite{flashmeta} perform compositional PBE by inferring input-output examples for sub-programs using the inverse semantics. In another example, Raza et al.~\cite{cps} rely on the natural language description to decompose the synthesis problems into smaller sub-problems. Furthermore, Zhang et al.~\cite{clis} decompose the synthesis task into simpler sub-problems in the domain of UDF-to-SQL translation using a dataflow graph. Our work differs from prior research by presenting a new decomposition strategy on a typed sketch in the context of synthesizing string-matching programs. While our decomposition approach helps reject incorrect programs using inferred positive examples, the full result must still be tested against the negative examples to ensure correctness.

\paragraph{{\bf Semantic Checks for String Matching}} There has been prior work in combining string matching with semantic matching~\cite{kat, katmodulo}; for example, Kleene algebra with tests (KATs)~\cite{kat} combines Kleene and Boolean algebra. While our semantic matching construct can be conceptually viewed as a semantic guard for string matching, one key difference is that the predicate (i.e. the ``test'') part of the language in KATs is restricted to boolean algebra, whereas our vocabulary of predicates is much richer, including function invocations and machine learning models. Furthermore, the intended application domains are quite different: our proposed semantic regexes are intended for textual data extraction, whereas KATs have traditionally been used in the context of verification.
\section{Conclusion}
We have presented \toolname, a new synthesis-powered system for data extraction. The key idea behind \toolname is the concept of \emph{semantic regexes}, which augments the syntactic pattern matching capabilities of regexes with a semantic pattern matching construct of the form $\{v: \tau \ | \ \phi \}$ which matches strings that have entity type $\tau$ and that satisfy logical predicate $\phi$ when interpreted as an instance of $\tau$. As shown in our user study from Section~\ref{sec:user_study}, semantic regexes allow users to more easily perform data extraction tasks that are hard to do using standard regular expressions.

In addition to proposing semantic regexes, we have also described a learning algorithm that can synthesize semantic regexes from examples. Our synthesis algorithm is neural-guided and uses a LLM to generate a \emph{typed sketch} where unknown parts of the regex have useful type annotations that are used to guide the search. Our synthesis algorithm is compositional and  uses type-directed reasoning to find a completion of each hole in the sketch. 
Our evaluation shows that our proposed approach outperforms alternative data extraction techniques in terms of precision, recall, and $F_1$ score. Our evaluation also shows the advantages of combining neural-guided sketch generation with type-directed compositional synthesis in terms of synthesis time.

\begin{acks}                            
  This material is based upon work supported by the
  \grantsponsor{GS100000001}{National Science
    Foundation}{http://dx.doi.org/10.13039/100000001} under Grant
  No.~\grantnum{GS100000001}{nnnnnnn} and Grant
  No.~\grantnum{GS100000001}{mmmmmmm}.  Any opinions, findings, and
  conclusions or recommendations expressed in this material are those
  of the author and do not necessarily reflect the views of the
  National Science Foundation.
\end{acks}

\bibliography{main.bib}

\pagebreak
\appendix
\section{Proofs}

\begin{lemma}\label{lemma:type_semantic_correspondance_1}
Let $\regex$ be a semantic regex of type $\type$ and $s$ be a arbitrary string such that ${\tt SemanticType}(s) \neq {\sf CharSeq}$, if $\sem{r}(s)$ evaluates to ${\tt True}$, then ${\tt Semantic}({\tt SemanticType}(s)) \subtype \type$.
\end{lemma}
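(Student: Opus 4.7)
The plan is to proceed by structural induction on the typing derivation $\vdash r : \type$, showing for each rule that whenever $\sem{r}(s)$ holds and ${\tt SemanticType}(s) \neq {\sf CharSeq}$, we have ${\tt Semantic}({\tt SemanticType}(s)) \subtype \type$. The inductive structure of the typing rules in Figure~\ref{fig:typing_rules} closely mirrors the structure of the regex, so each case corresponds to exactly one construct (or one typing rule for that construct).

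For the base cases, I would handle the three rules that actually constrain a matched string's semantic type. For \textsc{Const-Semantic} ($r = c$, $\type = {\tt Semantic}({\tt SemanticType}(c))$), the only matched string is $c$ itself, so ${\tt SemanticType}(s) = {\tt SemanticType}(c)$ and the conclusion is immediate. For \textsc{Const-CharSeq} and for \textsc{CC} (non-numeric character classes), every matched string has type ${\sf CharSeq}$, so the lemma's hypothesis is vacuously false and the case is trivial. For \textsc{CC-Num}, any matched string is a digit sequence whose semantic type is a subtype of ${\sf Number}$, so ${\tt Semantic}({\tt SemanticType}(s)) \subtype {\tt Semantic}({\sf Number})$ by the \textsc{Semantic} subtyping rule. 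The two \textsc{MatchSem} rules are the only remaining base cases, and they are where the real content lives: by the semantics in Figure~\ref{fig:dsl_semantics_match}, a matched $s$ has the form $\sem{f}(s')$ with ${\tt SemanticType}(s') = \type_b$ (or $\type_q$), so I need to conclude ${\tt SemanticType}(s) \subtype \type_b$ and then apply the \textsc{Semantic} subtyping rule.

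For the inductive cases, the compositional rules fall into two groups. Rules whose result type is ${\sf Any}$ (\textsc{Not}, \textsc{Concat}, and the Kleene/repetition variants, which are derivable via \textsc{Concat}) are trivial, because every type is a subtype of ${\sf Any}$. For \textsc{Union} I would use the IH on whichever of $r_1,r_2$ matches $s$ and then observe that $\type_i \subtype \type_1 \join \type_2$ by the definition of $\join$ in Figure~\ref{fig:type_intersection}. For \textsc{And} I apply the IH to both premises and use that $\meet$ is the greatest lower bound, so ${\tt Semantic}({\tt SemanticType}(s))$ sits below both $\type_1$ and $\type_2$ and hence below $\type_1 \meet \type_2$. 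For \textsc{Optional} on $r?$, if $s = \epsilon$ then ${\tt SemanticType}(s) = {\sf CharSeq}$ is excluded by hypothesis; otherwise $r$ matches $s$ and the IH combined with \textsc{Optional-Width} gives the result. Finally, \textsc{Lifting} follows directly from the IH and transitivity of $\subtype$.

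The main obstacle is the \textsc{MatchSem} case when the function $f$ is nontrivial (e.g., {\tt toUpper}, {\tt toLower}, or {\tt abbreviate}). There, $s$ is not the string whose semantic type the semantics directly certifies; rather, ${\tt SemanticType}(\cdot)$ is applied to the pre-image $s'$. To close the argument, I need the property that each built-in $f$ is \emph{semantics-preserving} in the sense that ${\tt SemanticType}(\sem{f}(s')) \subtype {\tt SemanticType}(s')$ whenever the latter is not ${\sf CharSeq}$. This is asserted informally when the DSL is introduced (the functions perform ``semantic-preserving string processing''), but it must be pinned down as an explicit assumption on the oracle ${\tt SemanticType}$ before the lemma can be formally discharged. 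Given this assumption, the \textsc{MatchSem} case reduces to one application of the \textsc{Semantic} subtyping rule, and the induction goes through.
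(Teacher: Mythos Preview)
Your proposal is correct and follows essentially the same structural-induction route as the paper, which inducts on the shape of $\regex$ directly rather than on the typing derivation; since the typing rules are syntax-directed (apart from \textsc{Lifting}), the two inductions are in one-to-one correspondence, and your explicit treatment of \textsc{Lifting} via transitivity is the one extra case this choice buys you.

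Where your analysis actually goes \emph{further} than the paper is in the \textsc{MatchSem} case. The paper's proof only treats the construct $\matchsemq{\type}{}$ with $f = {\tt id}$, and silently omits the case where $f$ is {\tt toUpper}, {\tt toLower}, or {\tt abbreviate}[c]. You correctly flag that, in that situation, the semantics certify ${\tt SemanticType}(s') = \type$ for the pre-image $s'$, not for $s = \sem{f}(s')$ itself, so the argument needs the additional assumption that the built-in $f$'s are semantics-preserving in the sense ${\tt SemanticType}(\sem{f}(s')) \subtype {\tt SemanticType}(s')$. That assumption is stated only informally in the DSL description and is indeed a gap in the paper's proof; your proposal makes it explicit, which is an improvement.
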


\begin{proof}
We prove this lemma by doing structural induction on $\regex$. 

\paragraph{{\bf Base Case 1:} $\regex = c$, where $c$ is a constant} Given the type of $\regex$ is derived using the typing rule \textsc{Const-Semantic}, we have $\tau = {\tt Semantic}({\tt SemanticType}(c))$. Furthermore, since $c \equiv s$, we can conclude ${\tt Semantic}({\tt SemanticType}(s)) \subtype \type$.

\paragraph{{\bf Base Case 2:} $\regex = cc$, where $cc$ is a character class} Here, we only focus on the case where $\regex$ is $<{\tt Num}>$. Following the typing rule \textsc{CC-Num}, we know $\type = {\tt Semantic}({\sf Number})$. Since $<{\tt Num}>$ matches a number of length 1, we know $s$ has the semantic type ${\sf Number}$. Therefore, we have ${\tt Semantic}({\tt SemanticType}(s)) \subtype \type$.

\paragraph{{\bf Base Case 3:} $\regex = \matchsemq{\type}{}$} Using the typing rules \textsc{MatchSem}, we know $\regex$ has the type ${\tt Semantic}(\type)$. According to the semantics presented in Figure~\ref{fig:dsl_semantics_match}, $\matchsemq{\type}{}$ matches all strings $s'$ of semantic type $\type$. Since $\matchsemq{\type}{}$ matches $s$, ${\tt Semantic}({\tt SemanticType}(s)) \subtype \type$. 

\paragraph{{\bf Inductive hypothesis:}} Assuming this lemma holds for all regexes $\regex_1, \cdots, \regex_n$.

\paragraph{{\bf Inductive case:}} We show that all programs constructed using the programs in the inductive hypothesis also satisfy this lemma by considering all the possible top-level constructs in the grammar.

\begin{itemize}
    \item $\regex = \neg\regex_1$. Using the typing rule \textsc{Not}, we know $\regex$ has the type ${\sf Any}$. Since any string has a semantic type $\type_s$ such that $\type_s \subtype {\sf Any}$,  we conclude ${\tt Semantic}({\tt SemanticType}(s)) \subtype \type$. 
    \item $\regex = \regex_1*$. Using the typing rule \textsc{Star-2} in Figure~\ref{fig:typing_rules_more}, we derive the type of$\regex$ to be ${\sf Any}$. Since any string has type $\type_s$ such that $\type_s \subtype {\sf Any}$,  we conclude ${\tt Semantic}({\tt SemanticType}(s)) \subtype \type$. 
    \item $\regex = \regex_1?$. Using the typing rule \textsc{Optional}, we know $\regex$ has the type ${\tt Optional}(\tau_1)$, where $\tau_1$ is the type of $\regex_1$. Using the inductive hypothesis, we know that ${\tt Semantic}({\tt SemanticType}(s)) \subtype \type_1$. Since $\regex$ matches all strings that can be matched by $\regex_1$ plus the empty string, we conclude ${\tt Semantic}({\tt SemanticType}(s)) \subtype \tau$. 
    \item $\regex = \regex_1 \bigstar$, where $\bigstar \in \{+, \{k\}, \{k_1, k_2\}\}$. The proof is similar to that of $\regex_1*$.
    \item $\regex = \regex_1 \cup \regex_2$. From the typing rule \textsc{Or}, we know $\regex$ has the type $\type_1 \vee \type_2$, where $\type_1$ is the type of $\regex_1$ and $\type_2$ is the type of $\regex_2$. Using the semantics of the Or operator, we know that $s$ can either be matched by $\regex_1$ or $\regex_2$. If $s$ is matched by $\regex_1$, then using the inductive hypothesis, we know ${\tt Semantic}({\tt SemanticType}(s)) \subtype \type_1$; if $s$ is matched by $\regex_2$, then using the inductive hypothesis, we know ${\tt Semantic}({\tt SemanticType}(s)) \subtype \type_2$. Following the definition of type union, we can conclude that ${\tt Semantic}({\tt SemanticType}(s)) \subtype (\type_1 \vee \type_2)$. 
    \item $\regex = \regex_1 \cap \regex_2$. From the typing rule \textsc{And}, we know $\regex$ has the type $\type_1 \wedge \type_2$, where $\type_1$ is the type of $\regex_1$ and $\type_2$ is the type of $\regex_2$. Using the semantics of the And operator, we know that $s$ can be matched by both $\regex_1$ and $\regex_2$. Using the inductive hypothesis, we then know ${\tt Semantic}({\tt SemanticType}(s)) \subtype \type_1$ and ${\tt Semantic}({\tt SemanticType}(s)) \subtype \type_2$. Following the definition of type intersection, we can conclude that ${\tt Semantic}({\tt SemanticType}(s)) \subtype (\type_1 \wedge \type_2)$. 
    \item $\regex = \regex_1 \cdot \regex_2$. From the typing rule \textsc{Concat}, we know $\regex$ has the type ${\sf Any}$. Since any string has a semantic type $\type_s$ such that $\type_s \subtype {\sf Any}$, we conclude that ${\tt Semantic}({\tt SemanticType}(s)) \subtype \type$.
\end{itemize}
\end{proof}

\begin{lemma}\label{lemma:type_semantic_correspondance_2}
    Let $\regex$ be a semantic regex of $\tau$ and $s$ be an arbitrary string with no semantic meaning (i.e. ${\tt SemanticType}(s) = {\sf CharSeq}$), if $\sem{r}(s)$ evaluates to ${\tt True}$, then ${\tt SemanticType}(s) \subtype \type$.
\end{lemma}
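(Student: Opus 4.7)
The plan is to prove Lemma~\ref{lemma:type_semantic_correspondance_2} by structural induction on the semantic regex $\regex$, following essentially the same shape as the proof of Lemma~\ref{lemma:type_semantic_correspondance_1}, but exploiting the fact that ${\sf CharSeq}$ sits very low in the type lattice and has only ${\sf Any}$ (and itself) as supertypes. The conclusion we want, ${\sf CharSeq} \subtype \type$, therefore reduces in most base cases to showing that $\type$ is either ${\sf CharSeq}$ or ${\sf Any}$, and is established directly from the corresponding typing rule.

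For the base cases I would proceed as follows. If $\regex = c$, then matching $s$ forces $c = s$, so ${\tt SemanticType}(c) = {\sf CharSeq}$, and typing rule \textsc{Const-CharSeq} gives $\type = {\sf CharSeq}$, whence ${\sf CharSeq} \subtype \type$ by reflexivity. If $\regex = cc$ with $cc \neq \langle {\tt Num}\rangle$, rule \textsc{CC} gives $\type = {\sf CharSeq}$. The remaining two base cases, $\regex = \langle{\tt Num}\rangle$ and $\regex = \matchsemq{(\type')}{f}$ (or its qualified variant), are \emph{vacuously true}: the semantics in Figure~\ref{fig:dsl_semantics_match} only match strings whose ${\tt SemanticType}$ is a non-${\sf CharSeq}$ semantic type, contradicting our assumption on $s$, so the hypothesis $\sem{\regex}(s) = {\tt True}$ fails.

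The inductive cases mirror those in Lemma~\ref{lemma:type_semantic_correspondance_1}. For $\neg\regex_1$ and $\regex_1 \cdot \regex_2$, rules \textsc{Not} and \textsc{Concat} assign $\type = {\sf Any}$, so ${\sf CharSeq} \subtype {\sf Any}$ by the axiom on the top element. For $\regex_1{?}$, using rule \textsc{Optional} together with \textsc{Optional-Width}, any type below $\type_1$ is also below $\opttype{\type_1}$, so the IH applied to the relevant disjunct (either $s = \epsilon$ or $\sem{\regex_1}(s)$) suffices. The Kleene-star and repetition cases reduce to ${\sf Any}$ analogously to Lemma~\ref{lemma:type_semantic_correspondance_1}. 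For $\regex_1 \cup \regex_2$, WLOG $\sem{\regex_1}(s)$ holds, so IH gives ${\sf CharSeq} \subtype \type_1$; then ${\sf CharSeq} \subtype \type_1 \join \type_2$ follows because $\type_1 \join \type_2$ is an upper bound of $\type_1$ (by case analysis on Figure~\ref{fig:type_intersection}, $\type_1 \join \type_2$ is always either $\opttype{\cdot}$, ${\sf Any}$, or something into which $\type_1$ embeds).

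\paragraph{Main obstacle.}
The delicate case is intersection, $\regex_1 \cap \regex_2$. Here both $\sem{\regex_1}(s)$ and $\sem{\regex_2}(s)$ hold, and the IH only yields ${\sf CharSeq} \subtype \type_1$ and ${\sf CharSeq} \subtype \type_2$. This forces each $\type_i$ to be ${\sf CharSeq}$, ${\sf Any}$, or an optional over such a type, since no ${\tt Semantic}(\type_s)$ has ${\sf CharSeq}$ as a subtype under the rules of Figure~\ref{fig:subtyping}. I would therefore argue by case analysis on the shapes of $\type_1$ and $\type_2$ that $\type_1 \meet \type_2$ never collapses to a semantic type in this regime, using the meet rules $\type \meet {\sf Any} = \type$, and the observation that the only problematic clause $\semtype{\type} \meet {\sf CharSeq} = \semtype{\type}$ is unreachable because a regex typed ${\tt Semantic}(\type)$ cannot match a ${\sf CharSeq}$ string (by the contrapositive of Lemma~\ref{lemma:type_semantic_correspondance_1}). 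The lifting rule completes each inductive step, yielding ${\sf CharSeq} \subtype \type$ as required.
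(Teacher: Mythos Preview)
Your proposal is correct and follows essentially the same structural-induction approach as the paper's proof. Two minor remarks: (i) your appeal to the ``contrapositive of Lemma~\ref{lemma:type_semantic_correspondance_1}'' in the intersection case is not quite licit, since that lemma is stated only for strings with ${\tt SemanticType}(s) \neq {\sf CharSeq}$; however, this is harmless because your primary argument---that the IH forces each $\type_i \in \{{\sf CharSeq}, {\sf Any}, \opttype{\cdot}\}$, so the problematic meet clause is never triggered---already suffices and is in fact more careful than the paper, which simply asserts ${\sf CharSeq} \subtype \type_1 \meet \type_2$ from the two IH facts without noting that $\meet$ is not a true greatest lower bound; (ii) for Kleene star the paper uses \textsc{Star-1} (yielding ${\sf CharSeq}$) rather than reducing to ${\sf Any}$, but either route gives the conclusion.
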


\begin{proof}
    We prove this lemma by doing structural induction on $\regex$. In this proof, we only prove those cases where the program might be constructed to have a type $\type$ such that ${\sf CharSeq} \subtype \type$. 

    \paragraph{{\bf Base Case 1:} $\regex = c$, where $c$ is a constant} Since the type of $\regex$ is derived using the typing rule \textsc{Const-CharSeq}, we have $\type = {\sf CharSeq}$. Therefore ${\tt SemanticType}(s) \subtype \type$.
    \paragraph{{\bf Base Case 2:} $\regex = cc$, where $cc$ is a character class} Here, we only focus on the case where $\regex$ is not $<{\tt Num}>$. Following the typing rule \textsc{CC}, we know $\type = {\sf CharSeq}$. Since the character class does not have semantic meaning, we have ${\tt Semantic}({\tt SemanticType}(s)) \subtype \type$.
    \paragraph{{\bf Inductive hypothesis:}} Assuming this lemma holds for all regexes $\regex_1, \cdots, \regex_n$.
    \paragraph{{\bf Inductive case: }} We show that all programs constructed using the programs in the inductive hypothesis also satisfy this lemma by considering all possible top-level constructs in the grammar.

    \begin{itemize}
        \item $\regex = \neg\regex_1$. Using the typing rule \textsc{Not}, we know $\regex$ has the type ${\sf Any}$. Since any string $s$ has a type $\type_s$ such that $\type_s \subtype {\sf Any}$,  we conclude $\type_s \subtype \type$. 
        \item $\regex = \regex_1*$. Using the typing rule \textsc{Star-1} in Figure~\ref{fig:typing_rules_more}, we derive the type of$\regex$ to be ${\sf CharSeq}$. Since $s$ has the type ${\sf CharSeq}$, ${\tt SemanticType}(s) \subtype \type$. 
        \item $\regex = \regex_1?$. Using the typing rule \textsc{Optional}, we know $\regex$ has the type ${\tt Optional}(\tau_1)$, where $\tau_1$ is the type of $\regex_1$. Using the inductive hypothesis, we know that ${\tt SemanticType}(s) \subtype \type_1$. Since $\regex$ matches all strings that can be matched by $\regex_1$ plus the empty string, we conclude ${\tt SemanticType}(s) \subtype \tau$. 
        \item $\regex = \regex_1 \bigstar$, where $\bigstar \in \{+, \{k\}, \{k_1, k_2\}\}$. The proof is similar to that of $\regex_1*$.
        \item $\regex = \regex_1 \cup \regex_2$. From the typing rule \textsc{Or}, we know $\regex$ has the type $\type_1 \vee \type_2$, where $\type_1$ is the type of $\regex_1$ and $\type_2$ is the type of $\regex_2$. Using the semantics of the Or operator, we know that $s$ can either be matched by $\regex_1$ or $\regex_2$. If $s$ is matched by $\regex_1$, then using the inductive hypothesis, we know ${\tt SemanticType}(s) \subtype \type_1$; if $s$ is matched by $\regex_2$, then using the inductive hypothesis, we know ${\tt SemanticType}(s) \subtype \type_2$. Following the definition of type union, we can conclude that ${\tt SemanticType}(s) \subtype (\type_1 \vee \type_2)$. 
        \item $\regex = \regex_1 \cap \regex_2$. From the typing rule \textsc{And}, we know $\regex$ has the type $\type_1 \wedge \type_2$, where $\type_1$ is the type of $\regex_1$ and $\type_2$ is the type of $\regex_2$. Using the semantics of the And operator, we know that $s$ can be matched by both $\regex_1$ and $\regex_2$. Using the inductive hypothesis, we then know ${\tt SemanticType}(s) \subtype \type_1$ and ${\tt SemanticType}(s) \subtype \type_2$. Following the definition of type intersection, we can conclude that ${\tt SemanticType}(s) \subtype (\type_1 \wedge \type_2)$. 
        \item $\regex = \regex_1 \cdot \regex_2$. From the typing rule \textsc{Concat}, we know $\regex$ has the type ${\sf Any}$. Since any string has a semantic type $\type_s$ such that $\type_s \subtype {\sf Any}$, we conclude that ${\tt SemanticType}(s) \subtype \type$.
    \end{itemize}
\end{proof}

\begin{lemma}\label{lemma:type_semantic_correspondence}
Let $\regex$ be a semantic regex of type $\type$ and $s$ be a arbitrary string, if $\sem{r}(s)$ evaluates to ${\tt True}$, then ${\tt Semantic}({\tt SemanticType}(s)) \subtype \type$.
\end{lemma}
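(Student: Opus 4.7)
The plan is to derive this lemma as a direct corollary of Lemmas~\ref{lemma:type_semantic_correspondance_1} and~\ref{lemma:type_semantic_correspondance_2} via a two-case analysis on ${\tt SemanticType}(s)$, rather than by a fresh structural induction on $\regex$. First I would split on whether ${\tt SemanticType}(s) = {\sf CharSeq}$. If ${\tt SemanticType}(s) \neq {\sf CharSeq}$, then the hypotheses of Lemma~\ref{lemma:type_semantic_correspondance_1} are satisfied and we obtain ${\tt Semantic}({\tt SemanticType}(s)) \subtype \type$ immediately. If instead ${\tt SemanticType}(s) = {\sf CharSeq}$, Lemma~\ref{lemma:type_semantic_correspondance_2} yields ${\tt SemanticType}(s) \subtype \type$, that is ${\sf CharSeq} \subtype \type$.

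The main obstacle will be reconciling the CharSeq case with the conclusion as stated, since the type grammar in Figure~\ref{fig:type_syntax} only applies the ${\tt Semantic}(\cdot)$ constructor to proper semantic types $\type_s$ and does not define ${\tt Semantic}({\sf CharSeq})$. The cleanest resolution is to adopt the convention ${\tt Semantic}({\sf CharSeq}) = {\sf CharSeq}$, which is natural since ${\sf CharSeq}$ already sits at the same level as $\semtype{\type_s}$ in the $\type'$ production and behaves as the ``non-semantic'' string type. Under this reading, the CharSeq case reduces exactly to ${\sf CharSeq} \subtype \type$, which Lemma~\ref{lemma:type_semantic_correspondance_2} provides. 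An alternative would be to restate the lemma as an explicit disjunction matching the two subcases, but since this would merely re-present what the two prior lemmas already say, the convention above seems preferable. In either reading, no additional induction over $\regex$ is required, and the proof collapses to a two-line case split with all of the real work discharged by the preceding lemmas.
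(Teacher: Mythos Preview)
Your proposal is correct and takes essentially the same approach as the paper: both proofs reduce the lemma to a two-case split that invokes Lemmas~\ref{lemma:type_semantic_correspondance_1} and~\ref{lemma:type_semantic_correspondance_2}. The paper phrases its split as a case analysis on the type $\type$ of $\regex$ rather than on ${\tt SemanticType}(s)$, but since the hypotheses of the two cited lemmas are stated in terms of ${\tt SemanticType}(s)$, your framing is arguably the more natural one; you also make explicit the ${\tt Semantic}({\sf CharSeq}) = {\sf CharSeq}$ convention that the paper leaves implicit.
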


\begin{proof}
    We prove this lemma by dividing the $\type$ into two cases: the case where $\type$ is a semantic type (i.e. either ${\tt Semantic}(\type_s)$ or ${\tt Optional}({\tt Semantic}(\type_s))$), and the case where $\type$ is either ${\sf CharSeq}$ or ${\tt Optional}({\sf CharSeq})$. The first case is proved in Lemma~\ref{lemma:type_semantic_correspondance_1}, the second case is proved in Lemma~\ref{lemma:type_semantic_correspondance_2}, and therefore proved this theorem. 
\end{proof}

\begin{theorem}
Consider the synthesis problem with positive examples $\mathcal{E}^+$. Let $\sketch$ be a candidate sketch and let $r$ be a completion of $\sketch$ mapping each hole $ h_i$ in $\sketch$ to a semantic regex $r_i$. If $r$ satisfies all positive examples $\mathcal{E}^+$, then there exists some $\goal \in {\textsc{GetNextDecomp}}(\sketch, \mathcal{E}^+)$ such that every $r_i$ satisfies $\goal[h_i]$. 
\end{theorem}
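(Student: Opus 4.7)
The plan is to prove the theorem by structural induction on the candidate sketch $\sketch$, exploiting the fact that the decomposition rules in Figure~\ref{fig:get_next_goal} mirror the recursive structure of sketches and that over-approximation is sound (every string accepted by some completion is accepted by the over-approximation). I will treat three cases: a concrete regex, a typed hole, and a compound sketch $f(\sketch_1,\ldots,\sketch_n)$.

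For the two base cases, the argument is short. If $\sketch$ is a concrete regex $r$, then $r$ has no holes and, since $r$ matches every $e \in \ex^+$ by hypothesis, rule {\sc Concrete-Feasible} applies and returns the empty decomposition, which vacuously satisfies the required property. If $\sketch = \thole{\type}$ is a typed hole, the completion consists of a single semantic regex $r_h$ with $\ex^+ \subseteq \sem{r_h}$. By Lemma~\ref{lemma:type_semantic_correspondence}, every $e \in \ex^+$ satisfies ${\tt Semantic}({\tt SemanticType}(e)) \subtype \type$, so the antecedent of {\sc Hole-Feasible} holds and the rule yields $[\thole{\type} \mapsto \ex^+]$. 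The desired witness is this mapping and $r_h$ satisfies $\ex^+$ by assumption.

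For the inductive case $\sketch = f(\sketch_1,\ldots,\sketch_n)$, let $r = f(r_1,\ldots,r_n)$ be the given completion and $\regex^\star = {\tt OverApprox}(\sketch)$. Since over-approximation is sound, $\sem{r} \subseteq \sem{\regex^\star}$, so $\regex^\star$ accepts every $e \in \ex^+$. The actual acceptance of $e$ by $r$ induces a top-level parse into substrings $(e_1,\ldots,e_n)$ with $e_i \in \sem{r_i}$. Because the holes in $\sketch$ were replaced by $.*$ (or $\emptyset$ under an odd number of negations) without altering the top-level $f$-structure, this very tuple is one of the parses enumerated by ${\tt Match}(\regex^\star,\ex^+)$. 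Selecting this parse produces sub-example sets $\ex^+_i$ such that $\ex^+_i \subseteq \sem{r_i}$. Invoking the inductive hypothesis on each $(\sketch_i, r_i, \ex^+_i)$ yields decompositions $\goal_i$ satisfied by the $r_i$-completions, and {\sc Sketch-Match} combines them via {\tt Merge} into a single $\goal$; since the sub-witnesses are all non-$\bot$, so is the merge.

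I expect the main technical obstacle to be the step that singles out the ``induced parse'' among the alternatives returned by ${\tt Match}(\regex^\star, \ex^+)$. Concretely, this requires that {\tt Match} be exhaustive over all top-level splittings that $\regex^\star$ admits, and that the over-approximation procedure preserves the outer operator structure of $\sketch$ verbatim. Both properties are inherited from the standard over-approximation scheme of \textsc{Regel}/\textsc{AlphaRegex}, but they need to be stated and invoked carefully; once they are, the induction goes through and the theorem follows. The remaining detail is a small lemma that {\tt Merge} of non-$\bot$ per-hole decompositions coincides (on each hole) with the corresponding sub-decomposition, which is immediate from the definition of disjoint union.
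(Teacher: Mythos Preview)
Your proposal is correct and follows essentially the same route as the paper: structural induction on $\sketch$, with the hole base case discharged via Lemma~\ref{lemma:type_semantic_correspondence} and the compound case handled by showing that the parse of each $e$ induced by the actual completion $r = f(r_1,\ldots,r_n)$ appears among the tuples returned by ${\tt Match}(\regex^\star,\ex^+)$, then applying the inductive hypothesis and {\tt Merge}. The paper's proof differs only cosmetically (it splits the inductive step into explicit feasible/infeasible sub-cases and phrases the exhaustiveness of {\tt Match} as ``assuming we have the precise inverse semantics of construct $f$''), which is exactly the technical obligation you flag in your final paragraph.
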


\begin{proof}
We prove this theorem by doing structural induction on $\sketch$.

\paragraph{{\bf Base Case 1:} $\sketch = \thole{\tau}$} Let $\regex$  be a completion of $\sketch$ that satisfies $\ex^+$. Since $\sketch$ is a single hole, we apply either \textsc{Hole-Feasible} or \textsc{Hole-inFeasible} rule for doing decomposition. 
\begin{itemize}
    \item {\bf if $\forall_{e \in \ex^+}. {\tt SemanticType}(e) \subtype \tau$}, then we apply the \textsc{Hole-Feasible} rule and obtain the decomposition $[\thole{\tau} \mapsto \ex^+]$. Since we know $\regex$ satisfies $\ex^+$, we obtain a decomposition such that $\regex$ satisfies $\goal[\holesym]$.
    \item {\bf if $\exists_{e \in \ex^+}. {\tt SemanticType}(e) \not\subtype \tau$}, then using Lemma~\ref{lemma:type_semantic_correspondence}, we know that there does not exist a $\regex$ such that it accepts $\ex^+$, which contradicts the assumption. 
\end{itemize}

\paragraph{{\bf Base Case 2:} $\sketch = \regex$, where $\regex$ is a concrete regex} Since there is no hole in this sketch, and we know $\regex$ satisfies $\ex^+$, following the rule \textsc{Concrete-Feasible}, we obtain an empty decomposition so this case is vacuously true. 

\paragraph{{\bf Inductive hypothesis:}} We assume that this theorem holds from sketch $\sketch_1, \cdots, \sketch_n$. 

\paragraph{{\bf Inductive case: }} We show that this theorem holds for all sketch that is constructed using the sketch in the inductive hypothesis. Let $\regex$ be a completion of the $\sketch = f(\sketch_1, \cdots \sketch_n)$ that satisfies $\ex^+$. Also let $\regex^\bigstar =  {\tt OverApprox}(\sketch)$, we prove this part of the theorem by dividing it into two cases:

\begin{itemize}
    \item If ${\tt Match}(\regex^\bigstar, \ex^+) \not\equiv \emptyset$, let $\regex^\bigstar = f(\regex_1^\bigstar, \cdots, \regex_n^\bigstar)$, following the definition of the over-approximation and assuming we have the precise inverse semantic of construct $f$, then there must exist $\ex^+_1, \cdots, \ex^+_n$, such that $\forall_i. \ \ex^+_i \in \sem{\regex_i^\bigstar}$ and $\{f(\ex^+_1[i], \cdots, \ex^+_n[i]) \ | \ i \in \{1..|\ex^+|\} \} \equiv \ex^+$. Using such set $\ex^+_1, \cdots \ex^+_n$, we obtain a set of decomposition $\Psi_i$ for each $\sketch_i$ using the examples $\ex^+_i$. \\
    - Assuming that for each sketch $\sketch_i$, we have a completion $\regex_i$ such that $\regex_i$ matches $\ex^+_i$. Then, following the inductive hypothesis, there exists a decomposition $\goal_i$ such that every $r_{ij}$ in $r_i$ where $j$ represents the regex for the $j$th hole, $r_{ij}$ satisfies $\goal_i[h_j]$. Once we obtain such a decomposition, we compose the decomposition of $\sketch$, $\goal$, by merging the decomposition $\goal_i$ for each $\sketch_i$ using the ${\tt Merge}$. Since $f$ is not a hole, all holes in $\sketch$ are included in $\goal$. Therefore, following the inductive hypothesis, we prove that the decomposition $\goal$ we obtained has a completion, $\regex$, where the part of the $\regex$ that corresponds to $h_i$ satisfies $\goal[h_i]$. \\
    - Assuming that there exists a $\sketch_i$ with no completion $\regex_i$ such that $\regex_i$ matches any one of the possible $\ex_i^+$. Following this assumption, as well as the assumption that $\ex^+_i$ is derived from the precise inverse semantics of $f$, we know that we can never compose a program $\regex$ using the completion of $\sketch_1, \ldots, \sketch_n$ such that $\regex$ can match the full positive examples $\ex^+$. We can therefore conclude $\sketch$ as well does not have a completion $\regex$ such that $\regex$ matches $\ex_i^+$, which contradicts the premise of the theorem. 
    \item If ${\tt Match}(\regex^\bigstar, \ex^+) \equiv \emptyset$, and since $\regex^\bigstar$ is an over-approximation, there does not exist a concrete regex $\regex'$ in the language such that $\regex'$ can match all the positive examples in $\ex^+$, which as well contradicts the premise of the theorem.
\end{itemize}

\end{proof}

\begin{lemma}\label{lemma:pruning_correctness}
    Let $\type_h, \ex^+, \ex^{\bigstar}$  be the inputs to \textsc{GetNextCompletion}, and let $\prog$ be a partial program such that there exists some complete program $\prog'$ with the most precise type $\tau_{\prog}$ that can be derived from $\prog$ such that $\tau \subtype \type_h$. If $\prog'$ can accept all the positive examples $\ex^+$, then \textsc{GetNextCompletion} will add $\prog$ to the worklist $\worklist$. 
\end{lemma}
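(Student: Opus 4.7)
The plan is to prove the lemma by structural induction on the sequence of \textsc{Expand} steps that produce the partial program $\prog$. The base case is immediate: the initial partial program $\prog_0 = ((s_\grammar, \type_h), \emptyset)$ is placed in the worklist at line 3 of \textsc{GetNextCompletion}, and by assumption its completion $\prog'$ has type $\tau_{\prog'} \subtype \type_h$, satisfying the hypothesis vacuously. For the inductive step, consider an expansion producing $\prog$ from some parent $\prog_p$; by the induction hypothesis $\prog_p$ was placed in the worklist, so it is eventually dequeued and expanded. I then need to verify that neither of the two pruning checks at lines 12--15 rejects $\prog$.

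For the type-directed feasibility check (lines 12--13), I would argue contrapositively: suppose some complete subprogram $\prog(n)$ of $\prog$ has an inferred type $\tau_{\prog(n)}$ that is \emph{not} a subtype of its annotated goal type ${\tt GoalType}(n)$. Every completion of $\prog$ extends $\prog(n)$ unchanged, so any completion $\prog''$ of $\prog$ would also have $\prog''(n) = \prog(n)$ with the same inferred type. By induction over the typing rules of Figures~\ref{fig:typing_rules} and~\ref{fig:type_intersection} — in particular the congruence properties of $\meet$ and $\join$, and the \textsc{Lifting} rule — the top-level type of $\prog''$ cannot be forced down to a subtype of $\type_h$ when one of its immediate sub-derivations violates its goal type annotation (Lemmas~\ref{lemma:type_semantic_correspondance_1} and~\ref{lemma:type_semantic_correspondance_2} give the corresponding semantic correspondence). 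This contradicts the existence of the hypothesized completion $\prog'$ with $\tau_{\prog'} \subtype \type_h$, so the type check must pass.

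For the over-approximation check (lines 14--15), I would invoke the definitional property of ${\tt OverApprox}$ used in Section~\ref{sec:decomp} and borrowed from \textsc{Regel}: for any partial program $\prog$, the regex $\regex^\star = {\tt OverApprox}(\prog)$ accepts every string accepted by any $\regex \in \sem{\prog}$. Since $\prog'$ is a completion of $\prog$ that accepts all positive examples $\ex^+$, each $e \in \ex^+$ is accepted by $\regex^\star$ as well, so the check at line 14 does not fire. Together with the previous paragraph, both pruning conditions are avoided, and $\prog$ is added to the worklist at line 16.

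The main obstacle will be the type-directed feasibility argument, because the DSL admits constructs such as \textsc{Union}, \textsc{And}, \textsc{Concat}, and \textsc{Not} whose result type is not monotone in each argument's type in the same way — for example, \textsc{Concat} and \textsc{Not} return \textsf{Any} regardless of arguments, while \textsc{And} uses $\meet$ which has a special case for $\textsf{CharSeq}$. A clean proof will therefore require a case analysis on the construct at the immediate parent of the completed node $n$ and an appeal to the fact that the goal type of $n$ is chosen to be exactly the constraint needed for the parent's type to be derivable; once this compositional property of the goal-type propagation is established, the rest of the argument is routine.
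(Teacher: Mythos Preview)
Your proposal is essentially the same as the paper's proof: both induct on the construction of the partial program (the paper phrases this as induction on the number of terminals in the AST, you phrase it as induction on the sequence of \textsc{Expand} steps, but these are equivalent), establish the base case from line~3, and then verify that neither pruning check at lines 12--15 eliminates a partial program with a valid completion.

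The main difference is in how you handle the type-directed check. The paper argues \emph{directly}: since $\prog'$ has type a subtype of $\type_h$, and ``assuming the soundness of the type propagation rules,'' every node of $\prog'$ has inferred type a subtype of its goal type; completed nodes of the partial program coincide with those of $\prog'$, so the check passes. You argue \emph{contrapositively} and flag \textsc{Concat}, \textsc{Not}, and the $\meet$ special case as obstacles. In fact those constructors make the argument \emph{easier}, not harder: because their output type is $\mathsf{Any}$ (or, for $\meet$, no tighter than either input), the goal type pushed to their children is $\mathsf{Any}$, which every completed subprogram trivially satisfies. So your case analysis is sound but overcautious, and the paper's one-line appeal to ``soundness of type propagation'' is really all that is needed. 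Also note that Lemmas~\ref{lemma:type_semantic_correspondance_1} and~\ref{lemma:type_semantic_correspondance_2} are not the right citations here---they relate a regex's type to the semantic type of strings it matches, which is orthogonal to the claim that goal-type annotations are preserved under completion. Your over-approximation argument matches the paper's exactly.
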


\begin{proof}
    We prove this lemma by inducting on the number of terminals $m$ in the AST of program $\prog$. 
    \paragraph{{\bf Base Case:} m = 0} The only such program $\prog_0$ with 0 terminals is a partial program with one hole that is annotated with the goal output type $\type_h$. This program is added to $\worklist$ on line 3 of Figure~\ref{fig:hole_synthesis}. 
    \paragraph{{\bf Inductive Hypothesis:}} Assume this lemma holds for all programs whose ASTs have less than $m$ terminals, where $m \geq 0$.
    \paragraph{{\bf Inductive Case:}} Suppose $\prog_{m + 1}$ has $m + 1$ terminals. Then there is some program $\prog_{m'}$ with $m' \leq m$ terminals and some production $\alpha$ such that expanding $\prog_{m'}$ with $\alpha$ produces $\prog_{m + 1}$. 
    
    Since $\prog'$ can be derived from $\prog_{m + 1}$ and therefore can also be derived from $\prog_{m'}$.By inductive hypothesis, $\prog_{m'}$ is added to $\worklist$. Then at some point $\prog_{m'}$ will be dequeued from $\worklist$ on line 5 of Figure~\ref{fig:hole_synthesis}. The ${\tt Expand}$ procedure on line 8 will identify $\alpha$ as a possible production, and will expand $\prog_{m'}$ to $\prog_{m+1}$. Since $\prog' \subtype \type_h$, then assuming the soundness of the type propagation rules, we know that for all nodes in $\prog'$, ${\tt TypeOf}(\prog'(n)) \subtype {\tt GoalType}(n)$. Since there exists a completion from $\prog_{m+1}$ to $\prog'$, we know $\forall n \in {\tt Nodes}(\prog_{m+1}) . \ {\tt IsComplete}(\prog_{m+1}(n)) \wedge \vdash {\tt TypeOf}(\prog_{m+1}(n)) \subtype {\tt GoalType}(n)$, and therefore pass the check on line 12.

    In addition, we also check if the over-approximation of $\prog_{m+1}$ can match all $\ex^+$. Since $\prog_{m+1}$ will be instantiated to a program that accepts $\ex^+$, and over-approximation of a partial program will accept $\ex^+$ if any instantiation of the partial program can accept $\ex^+$, $\prog_{m+1}$ passes the check on line 15 and eventually be added to the worklist $\worklist$ on line 16. 
\end{proof}

\begin{theorem}
Let $R$ be the set of solutions returned by {\sc GetNextCompletion}$(\tau_h, \mathcal{E}^+, \mathcal{E}_\star)$. We have:
\begin{itemize}[leftmargin=*]
    \item {\bf Soundness:} Every $r \in R$ is a solution to the hole synthesis problem, meaning (1) $r$ has type $\tau_h$ and (2) satisfies examples $\mathcal{E}^+$
    \item {\bf Completeness:} If $r \not \in R$, then $r$ is either not a solution or is observationally equivalent to some $r' \in R$ for strings in $\mathcal{E}_\star$. 
\end{itemize}
\end{theorem}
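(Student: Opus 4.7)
The plan is to separate the two parts of the theorem: soundness follows almost directly from the yield guards in Figure~\ref{fig:hole_synthesis}, while completeness reduces to an application of Lemma~\ref{lemma:pruning_correctness} combined with a bookkeeping argument about the observational equivalence filter.

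For soundness, I would observe that programs are added to $R$ only at line 9, and only when the guards on lines 5--8 are all satisfied. The guard on line 6 explicitly checks both $\vdash \prog : \tau_h$ (so by the lifting rule the program has type $\tau_h$) and $\bigwedge_{e \in \mathcal{E}^+} {\tt match}(e, \prog)$. These are exactly the two conditions required to be a solution to the hole synthesis problem, so soundness is immediate.

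For completeness, I would argue by contradiction. Suppose $r$ is a valid solution of type $\tau_h$ that matches $\mathcal{E}^+$, that $r$ is not observationally equivalent on $\mathcal{E}_\star$ to any $r' \in R$, and yet $r \notin R$. Since $r$ is itself a complete program and can trivially be derived from the initial partial program $\prog_0$, Lemma~\ref{lemma:pruning_correctness} guarantees that $r$ is eventually enqueued in $\worklist$ (i.e.\ neither the type-directed check on line 12 nor the over-approximation check on line 14 removes a partial program from which $r$ can be derived). When $r$ is eventually dequeued, ${\tt IsComplete}(r)$ holds and the guards of line 6 pass by assumption. The algorithm then computes $\ex = \{e \in \mathcal{E}_\star \mid \neg {\tt match}(e, r)\}$. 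Either $\ex \notin \res$, in which case $r$ is yielded and $r \in R$, contradicting $r \notin R$; or $\ex \in \res$, meaning some previously yielded $r' \in R$ rejects exactly the same subset of $\mathcal{E}_\star$ as $r$, so $r$ and $r'$ are observationally equivalent on $\mathcal{E}_\star$, contradicting our assumption.

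The main obstacle is ensuring that no valid solution is pruned during ${\tt Expand}$, which is exactly what Lemma~\ref{lemma:pruning_correctness} already provides (appealing, in turn, to the soundness of the typing rules in Figure~\ref{fig:typing_rules} and of ${\tt OverApprox}$, inherited from prior work on {\sc Regel}). With that lemma in hand the remaining argument is purely bookkeeping. The one subtlety to unpack carefully is the notion of observational equivalence: the statement of completeness quantifies over strings in $\mathcal{E}_\star$, and this matches the algorithm's definition of $\ex$ on line 7, so the final case split works out cleanly.
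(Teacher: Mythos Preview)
Your proposal is correct and follows essentially the same approach as the paper: soundness from the yield guards, and completeness by appealing to Lemma~\ref{lemma:pruning_correctness} together with the observational-equivalence filter. The only cosmetic differences are that the paper argues completeness directly rather than by contradiction, and your line references are off by one (the type/match guard is line~7, the computation of $\ex$ is line~8).
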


\begin{proof}
    We first prove the soundness of the algorithm. To return a concrete program $\prog$, we check if $\prog$ has a type that subtypes 
$\tau_h$ on line 7 of Figure~\ref{fig:hole_synthesis}, which proves point (1); Furthermore, we also check if $\prog$ matches all the positive examples  on line 7 of Figure~\ref{fig:hole_synthesis}, which proves point (2). 

We now prove the completeness of the algorithm. By Lemma~\ref{lemma:pruning_correctness}, any partial program $\prog$ that might expand to a solution (i.e. (1) $r$ has type $\tau_h$ and (2) satisfies examples $\mathcal{E}^+$) is added to the worklist $\worklist$. Note that the termination criteria for \textsc{GetNextCompletion} is when the worklist $\worklist$ is exhausted. Thus, $\prog$ will be dequeued at line 5 at some time during the synthesis procedure. In line 7, we first check to ensure that $\prog$ is indeed a correct solution, and then in line 8, we also check if $\prog$ is observationally equivalent to some $\regex \in R$ with respect to the set $\ex_\bigstar$. Since line 9 is the only place we return $\prog$, we obtain all the solutions that are (1) correct and (2) not observationally equivalent to other programs in the set when \textsc{GetNextCompletion} terminates. 
\end{proof}

\section{Additional Semantics of the DSL}

We provide the semantics of the string transformation part of the DSL in Figure~\ref{fig:dsl_semantics_transformation}.

\begin{figure}[ht]
    \[
    \begin{array}{r l}
    \sem{{\tt id}}s = & s \\
    \sem{{\tt toLower}}s = & {\tt ToLowerCase}(s) \\
    \sem{{\tt toUpper}}s = & {\tt ToCapitalCase}(s) \\ 
    \sem{{\tt substring}[k_1, k_2]}s = & s[k_1:k_2] \\
    \sem{{\tt abbreviate}[c]}s = & {\tt reduce}(\lambda x, y. \ x \cdot c \cdot y, {\tt split}(s)) \cdot c
    \end{array}
    \]
    \caption{Semantics of string transformation part of the DSL.}   
    \label{fig:dsl_semantics_transformation}
\end{figure}

\section{Typing Rules}

We provide the rest of the typing rules in Figure~\ref{fig:typing_rules_more}.

\begin{figure}[ht]
\vspace{-0.5cm}
\small
\begin{mathpar}
    \inferrule*[Left=Star-1]{\vdash r: {\sf CharSeq}}{\vdash r* : {\sf CharSeq}}\and 
    \inferrule*[Left=Star-2]{\vdash r: \type \ \ \ \type \not\subtype {\sf CharSeq}}{\vdash r* : {\sf Any}} \\
    \inferrule*[Left=Plus-1]{\vdash r: {\sf CharSeq}}{\vdash r+ : {\sf CharSeq}}\and 
    \inferrule*[Left=Plus-2]{\vdash r: \type \ \ \ \type \not\subtype {\sf CharSeq}}{\vdash r+ : {\sf Any}} \\
    \inferrule*[Left=RepeatRange-1]{\vdash r: {\sf CharSeq}}{\vdash r\{k_1, k_2\}: {\sf CharSeq}}\and
    \inferrule*[Left=RepeatRange-2]{\vdash r: \type \ \ \ \type \not\subtype {\sf CharSeq}}{\vdash r\{k_1, k_2\}: {\sf Any}} \\
    \inferrule*[Left=Repeat-1]{\vdash r: {\sf CharSeq}}{\vdash r\{k_1\} : {\sf CharSeq}}\and
    \inferrule*[Left=Repeat-2]{\vdash r: \type \ \ \ \type \not\subtype {\sf CharSeq}}{\vdash r\{k_1\} : {\sf Any}}

\end{mathpar}
\caption{Additional typing rules.}    
\label{fig:typing_rules_more}
\vspace{-0.5cm}
\end{figure}

\section{User Study Procedure}

In this section, we describe our user-study protocol in more detail. 

\paragraph{User study sessions} Our user study was completed in $13$ sessions, one for each participant. The participants used the same laptop, with the tool installed, across all sessions. 

\paragraph{Participant introduction} We started each user study session by first giving a general description of the task and the goal of the task. In particular, we asked them to complete 4 tasks using either standard regexes or semantic regexes. The specification for each task is a set of positive examples and a set of negative examples and the goal is to write a generalizable program that differentiate positive examples from negative examples. In order to minimize the effect of knowledge transfer, we randomly determined whether a participant was first given a task using standard regex or using semantic regex.

\begin{table}[t]
    \centering
    \scriptsize
    \begin{tabular}{cp{40mm}p{40mm}}
    \toprule
         Task Description & Sample Positive Examples &  Sample Negative Examples \\
    \midrule
         Business name with a store id & 24 HOUR FITNESS, INC. \#547 \newline 7 Eleven \#2366-24139C \newline BURGER KING 4525 & 1601 Bar \& Kitchen \newline 24 Hour Fitness, Inc \newline AT\&T - DOGGIE DINER room 3228 \\
    \midrule
        TVs of size >= 42’ and resolution >= 2160P &  43" Class LED 2160p \newline 49" Class OLED 2160p \newline 60" Class LED 4320p & 48" Class LED 1080p \newline 40" Class LED 2160p \newline 50" Class LED 1080p \\
    \midrule
        Product that contains measurement information & SWIRLY MARBLES BAG 125 g \newline VINTAGE FOLDING RULER 50 CM \newline BLACK CHRISTMAS TREE 30 CM & VINTAGE TREE 2 CM \newline CAT 12 LBS \newline DISTANCE FROM HOME 200 M \\
    \midrule
        European artists born before 1900 & French, 1610-1686|Italian, 1635-1688 \newline Italian, 1555-1630|Dutch, 1586-1652 \newline French, 1751-1832 & Swiss, 1901-1966|Swiss, 1902-1985 \newline Mexican, 1900-1940 \newline American, 1786-1877 \\
    \bottomrule
    \end{tabular}
    \caption{Descriptions of task used in the user study}
    \label{tab:user_study_task}
\end{table}

\paragraph{Task selection} We randomly selected 4 tasks from all the tasks we have in the benchmark. To ensure the tasks can be finished within 5 minutes, we slightly simplified the task. The description of the tasks and the provided sample positive and negative examples are presented in Table~\ref{tab:user_study_task}. 

\paragraph{Training} We start the training procedure walking the user through a regex ``cheatsheet'' that contains the syntax and semantics for both standard regexes and semantic regexes, as well as some sample programs in each representation. After users are comfortable with both types of regexes, we demonstrate the workflow of a task by walking the user through a following training task:

\footnotesize
\begin{tabular}{cp{35mm}p{35mm}}
    \toprule
         Task Description & Sample Positive Examples &  Sample Negative Examples \\
    \midrule
         Place associated with a year earlier than 1960 & France 1958 \newline New York 1624 \newline Tokyo 1868 & Google 2008 \newline Meta 2004 \newline Palm Desert 1973 \\
    \bottomrule
\end{tabular}
\normalsize

Initially, we present the task to the user in the context of composing a standard regex. Upon displaying the prompt on the interface, we guide the user in crafting a regex for the specific task by elucidating the distinctions between positive and negative examples. Furthermore, we demonstrate the output generated when executing their program on the task's strings within the interface. We emphasize the presence of the unseen test set and remind users that their objective also includes optimizing performance on this test set. We go through this training task for writing a semantic regex in a similar way. We finish the training procedure by asking if the user has additional questions about the procedure.

\paragraph{User study workflow} Upon completing the training procedure, the user commences work on the tasks. We notify the user at the beginning of each task whether they should use a standard regex or a semantic regex to solve it. The command-line interface records the user's inputs throughout the study, selecting their program with the highest performance on the unseen test set as their final submission. We terminate the task early if the user either resolves the task (on both the sample set and testing set) or determines they cannot improve upon their current program. Otherwise, we conclude the task after the 5-minute time limit and proceed to the next task. The whole study took around 35 minutes per participant.

\section{Benchmark Descriptions}

The description of all the tasks used in the evaluation is shown in Table~\ref{tab:benchmarks}.

\begin{table}[t!]
    \centering
    \scriptsize
    \begin{tabular}{cc}
    \toprule
    {\bf Domain} & {\bf Task Description} \\
    \midrule
    \multirow{7}{*}[2pt]{Business} & {Businesses located in California}  \\
    & {Business located in city with names started with a capital letter}  \\
    & {Owner names that is associated with a company}  \\
    & {Store names with id}  \\
    & {Market names that contains street name}  \\
    & {Restaurant names that are named after a person}  \\
    & {Restaurant that are created before 2000 or after 2010}  \\
    \midrule
    \multirow{6}{*}{Sales} & {Products with Intel CPU that has more than 8GB memory}  \\
    & {AT\&T or Verizon phone that has more than 32 GB memory}  \\
    & {DSLR cameras with lens of focal length between 18 and 200mm}  \\
    & {TVs of diagonal size more than 42' and resolution greater than 2160P}  \\
    & {TVs of diagonal size less than 50' or resolution less than 1080P}  \\
    & {Purchase dates that occur in the evening of May}  \\
    \midrule
    \multirow{6}{*}{Retail} & {Website titles with three categories separated by `|'}  \\
    & {Website titles that start with product names and followed by a url}  \\
    & {Product names that contain measurement information}  \\
    & {Product that contains at least 6 sets of items}  \\
    & {Products names that contain color information}  \\
    & {Jewelry names that contain color information}  \\
    \midrule
    \multirow{4}{*}{Marketing} & {Software engineers job with salary > 100k}  \\
    & {Software engineers job that have specified working locations}  \\
    & {Business names with at least 3 words}  \\
    & {Business names in the format of a certain agency of a place}  \\
    \midrule
    \multirow{3}{*}{Account} & {Emails that contains numbers in the username part}  \\
    & {Emails in a country domain and with username ends with number}  \\
    & {Software versions with at least 10 minor updates and more than one patches}  \\
    \midrule
    \multirow{2}{*}{Stock} & {Company names with 3-letter abbreviation}  \\
    & {Company names with ticker symbols containing special characters}  \\
    \midrule
    \multirow{5}{*}{Science} & {Locations description of format State; County; More details}  \\
    & {Location description containing distance information}  \\
    & {Locations that are either in forests or in parks}  \\
    & {Locations that are around lakes}  \\
    & {Locations that are less than 11 miles from a road}  \\
    \midrule
    \multirow{4}{*}{Server} & {Apache log with file id >=151000 or in the format of a zip file with id <= 50}  \\
    & {Files with at least 2 directory deep}  \\
    & {PHP files}  \\
    & {Photo files with numbers in their names}  \\
    \midrule
    \multirow{6}{*}{Museum} & {Gift from two people after 2000}  \\
    & {Gift from some institution before 2000}  \\
    & {Purchase made by using 3 different funds}  \\
    & {Artwork with two artists, both born in 14th century}  \\
    & {Artwork with artists all from European countries}  \\
    & {Artists biography that at least contain country and year information, may contain born city}  \\
    \midrule
    \multirow{5}{*}{Exhibition} & {Dimension of item with length, height and width all between 10 and 50 inches}  \\
    & {Dimension of item with height greater than 10 inch and is described in a specific format}  \\
    & {Dimension of item with both height and diameter less than 10 inches}  \\
    & {Item that is associated with at least three categories}  \\
    & {Item that is associated with a set of categories where each category is a single word}  \\
    \bottomrule
    \end{tabular}
    \caption{Description of the tasks used in the evaluation.}\label{tab:benchmarks}
\end{table}

\section{Prompting Details}

\noindent The prompt for the sketch generation is shown in Figure~\ref{fig:full_prompt_sketch_gen}.

\noindent The prompt for the semantic matching is shown in Figure~\ref{fig:prompt_entity_recognition}.

\noindent  The prompt for running the baseline \textsc{ChatGPT-Exec} is shown in Figure~\ref{fig:prompt_chatgpt_exec}.

\noindent The prompt for running the baseline \textsc{ChatGPT-Regex-Synth} is shown in Figure~\ref{fig:prompt_chatgptregexsynth_1} and Figure~\ref{fig:prompt_chatgptsynth_2}.

\noindent  The prompt for running the baseline \textsc{ChatGPT-Synth} is shown in Figure~\ref{fig:prompt_chatgptsynth_1}, Figure~\ref{fig:prompt_chatgptsynth_2} and Figure~\ref{fig:prompt_chatgptsynth_3}.

\begin{figure}
    \centering
    \scriptsize
    \begin{tabularx}{\linewidth}{|X|}
    \toprule
    \multicolumn{1}{|c|}{Prompt for Sketch Generation} \\
    \midrule
    Summarize the structure of the following positive examples in the form of a regular expression sketch. Use \{??: <semantic type>\} to represent the unknown part of the sketch. \\
    Positive examples: \\
    - (David J. Alexander), Marc Henri Sempere and Jocelyn Bulow \\
    - (Connie Wong), Sai Wong\\
    - (Amin Abughosh) and Joseph Abughosh and Abeer Elafifi \\
    Sketch: \\
    - \textbackslash(\{??: Person\}\textbackslash) ((\&|and|,) \{??: Person\})+ \\
    \\
    Summarize the structure of the following positive examples in the form of a regular expression sketch. Use \{??: <semantic type>\} to represent the unknown part of the sketch. \\
    Positive examples: \\
    - Arugello Market Corp. \\
    - HollyFrontier Corporation \\
    - Iron Pan, Inc. \\
    Sketch: \\
    - \{??: Company Name\} (, Inc|\{??: Corporation\})?(\textbackslash.)? \\
    \\
    Summarize the structure of the following positive examples in the form of a regular expression sketch. Use \{??: <semantic type>\} to represent the unknown part of the sketch. \\
    Positive examples: \\
    - Bistro Burger Market Street \\
    - Coffeeshop - 3139 Mission \\
    - Crab Station at Fisherman's Wharf \\
    Sketch:\\
    - \{??: Restaurant\} ((-|at) )?\{??: Location\} \\
    \\
    Summarize the structure of the following positive examples in the form of a regular expression sketch. Use \{??: <semantic type>\} to represent the unknown part of the sketch. \\
    Positive examples: \\
    - 15. Mugs \& Cups | Drinkware | Google Merchandise Store \\
    - 15. Bags | Google Merchandise Store \\
    - 10. Men's Outerwear | Apparel | Google Merchandise Store \\
    Sketch:\\
    - \{??: Integer\}\textbackslash. \{??: Product\} \textbackslash|(\{??: Category\} \textbackslash|)?Google Merchandise Store \\
    \\
    Summarize the structure of the following positive examples in the form of a regular expression sketch. Use \{??: <semantic type>\} to represent the unknown part of the sketch. \\
    Positive examples: \\
    - Gift of Robert McBratney and Company|1929 \\
    - Gift of Minic Custom Woodwork, Inc. New York|1983 \\
    - Purchase, Edward C. Moore Jr. Gift|1923 \\
    Sketch:\\
    - (Purchase, )?\{??: Gift\}\textbackslash|\{??: Date\} \\
    \\
    Summarize the structure of the following positive examples in the form of a regular expression sketch. Use \{??: <semantic type>\} to represent the unknown part of the sketch. \\
    Positive examples: \\
    - 0.5 m (50 cm)\\
    - 1.55 kg (1550 g)\\
    - .5 cm (50 mm)\\
    Sketch:\\
    - \{??: Float\} \{??: Unit\} \textbackslash(\{??: Float\} \{??: Unit\}\textbackslash)\\
    \\
    Summarize the structure of the following positive examples in the form of a regular expression sketch. Use \{??: <semantic type>\} to represent the unknown part of the sketch. \\
    Positive examples:\\
    - 0.5 m, 50 cm\\
    - 0.05 m, 5 cm\\
    - 0.05 m, 0.5 cm\\
    Sketch:\\
    - \{??: Float\} m, \{??: Float\} cm\\
    \\
    Summarize the structure of the following positive examples in the form of a regular expression sketch. Use \{??: <semantic type>\} to represent the unknown part of the sketch. \\
    Positive examples:\\
    - Director of DevOps,R\&D,54,53,53,16,63,17\\
    - Head of People Ops,Finance \& Operations,,10,10,2,4,2\\
    - Sr. Product Manager,Product,27,9,16,4,18,10\\
    Sketch:\\
    - \{??: Job\},\{??: Department\}(,\{??:Integer\})\{6\} \\
    \bottomrule
    \end{tabularx}
    \caption{Prompt for sketch generation.}
    \label{fig:full_prompt_sketch_gen}
\end{figure}

\begin{figure}
    \centering
    \scriptsize
    \begin{tabularx}{\linewidth}{|X|}
    \toprule
    \multicolumn{1}{|c|}{Prompt for Entity Recognition} \\
    \midrule
    Identify all possible substrings of the given input that has the specified semantic. Output none if you are not confident enough.\\\\
    Composite.Motors,Inc.\\
    Organization: [Composite.Motors];[Composite.Motors,Inc];[Composite.Motors,Inc.] \\\\
    Composite.Motors,Inc. \\
    Person: none\\\\
    Big Data Architect at Madison, WI\\
    Place: [Madison];[WI];[Madison, WI]\\\\
    470-43" Class (42.5" Diag.)   LED   1080p\\
    Integer: [470];[43];[1080]\\\\
    2011-03-02\\
    Date: [2011-03-02]\\\\
    1955-10-18\\
    Date: [2011]\\\\
    404-Stream 11.6" Laptop   Intel Celeron   2GB Memory\\
    Product: [Stream 11.6" Laptop];[Intel Celeron]\\\\
    Set 2 Tea Towels I Love London\\
    Item: [Tea Towels][Tea Towels I Love London]\\
    \bottomrule
    \end{tabularx}
    \caption{Prompt for entity recognition.}
    \label{fig:prompt_entity_recognition}
\end{figure}
\begin{figure}
    \centering
    \scriptsize
    \begin{tabularx}{\linewidth}{|X|}
    \toprule
    \multicolumn{1}{|c|}{Prompt for \textsc{ChatGPT-Exec}} \\
    \midrule
    Given the string below, output `Yes' if this string should be matched, `No' if this string should not be matched.\\\\
    Training\_positive\_example\_1 \\
    Matched? Yes\\
    \\$\cdots$\\\\
    Training\_negative\_example\_1 \\
    Matched? No\\
    \\$\cdots$\\\\
    Testing\_example\\
    Matched?\\
     \bottomrule
    \end{tabularx}
    \caption{Prompt for \textsc{ChatGPT-Exec}.}
    \label{fig:prompt_chatgpt_exec}
\end{figure}
\begin{figure}
    \centering
    \scriptsize
    \begin{tabularx}{\linewidth}{|X|}
    \toprule
    \multicolumn{1}{|c|}{Prompt for Running \textsc{ChatGPT-Regex-Synth} I} \\
    \midrule
    Find a program using a regular expression such that the program can match the positive examples and reject the negative examples.\\
    Positive examples: \\
    - (David J. Alexander), Marc Henri Sempere and Jocelyn Bulow \\
    - (Connie Wong), Sai Wong\\
    - (Amin Abughosh) and Joseph Abughosh and Abeer Elafifi \\
    Negative examples: \\
    - Connie Wong, Sai Wong\\
    - Amin Abughosh\\
    - Chilli House Inc.\\
    Program:\\
    - ([\textbackslash w .]+)(,)?( and)? [\textbackslash w .]+\\\\
    Find a program using a regular expression such that the program can match the positive examples and reject the negative examples.\\
    Positive examples: \\
    - Arugello Market Corp. \\
    - HollyFrontier Corporation \\
    - Iron Pan, Inc. \\
    Negative examples: \\
    - WONG JUDITH L \\
    - South Seattle\\
    - Brass Instrument Lubricants\\
    Program:\\
    - [\textbackslash w .]+,? (Corp|Inc)[.]?(oration)?\\\\
    Find a program using a regular expression such that the program can match the positive examples and reject the negative examples.\\
    Positive examples: \\
    - Bistro Burger Market Street \\
    - Coffeeshop - 3139 Mission \\
    - Crab Station at Fisherman's Wharf \\
    Negative examples:\\
    - 20th Century Cafe \\
    - ALL SEASON MARKET\\
    - AUTO CITY BRUSHLESS CAR WASH\\
    Program:\\
    - .* (at [\textbackslash w ']+|Street|\textbackslash d+ [\textbackslash w ']+)\\\\
    Find a program using a regular expression such that the program can match the positive examples and reject the negative examples.\\
    Positive examples: \\
    - 15. Mugs \& Cups | Drinkware | Google Merchandise Store \\
    - 15. Bags | Google Merchandise Store \\
    - 10. Men's Outerwear | Apparel | Google Merchandise Store \\
    Negative examples:\\
    - 2. Women's T-Shirts | Apparel | Google Merchandise Store\\
    - 22. Water Bottles \& Tumblers | Drinkware | Google Merchandise Store\\
    - Google Women's Yoga Pants
    Program:\\
    - 1\textbackslash d\textbackslash . .*[|].*[|]?.*\\\\
    Find a program using a regular expression such that the program can match the positive examples and reject the negative examples.\\
    Positive examples: \\
     - Gift of Robert McBratney and Company|1929 \\
    - Gift of Minic Custom Woodwork, Inc. New York|1983 \\
    - Purchase, Edward C. Moore Jr. Gift|1923 \\
    Negative examples:\\
    - Fletcher Fund, 1941 \\
    - Gift of Emma and Jay A. Lewis|2004\\
    - The Michael C. Rockefeller Memorial Collection, Gift of Harry M. Miller Jr., and Professor Paulo de Goes, 1965\\
    Program:\\
    - (Purchase, )?(Gift of .*|.* Gift)|1\textbackslash d\{3\}\\\\
    Find a program using a regular expression such that the program can match the positive examples and reject the negative examples.\\
    Positive examples:\\
     - 0.5 m (50 cm)\\
    - 1.55 kg (1550 g)\\
    - .5 cm (50 mm)\\
    Negative examples:\\
    - 0.6 m (60 cm) \\
    - 2.20 kg (2200 g)\\
    - .8 cm (80 mm)\\
    Program:\\
    - \textbackslash d*[.]\textbackslash d*5 (m|cm|kg) (\textbackslash d*5\textbackslash d* (cm|g|mm))\\
    \bottomrule
    \end{tabularx}
    \caption{Prompt for the \textsc{ChatGPT-Regex-Synth} I.}
    \label{fig:prompt_chatgptregexsynth_1}
     \end{figure}
     
    \begin{figure}
    \centering
    \scriptsize
    \begin{tabularx}{\linewidth}{|X|}
    \toprule
    \multicolumn{1}{|c|}{Prompt for Running \textsc{ChatGPT-Regex-Synth} II} \\
    \midrule
     Find a program using a regular expression such that the program can match the positive examples and reject the negative examples.\\
     Positive examples:\\
     - 0.5 m, 50 cm\\
    - 0.05 m, 5 cm\\
    - 0.05 m, 0.5 cm\\
    Negative examples:\\
    - 0.6 m, 60 cm\\
    - 0.05 m (5 cm)\\
    - .8 cm, 80 mm\\
    Program:
    - \textbackslash d*.\textbackslash d*5 m, \textbackslash d*.?5\textbackslash d* cm \\\\
     Find a program using a regular expression such that the program can match the positive examples and reject the negative examples.\\
     Positive examples:\\
     - Director of DevOps,R\&D,54,53,53,16,63,17\\
    - Head of People Ops,Finance \& Operations,,10,10,2,4,2\\
    - Sr. Product Manager,Product,27,9,16,4,18,10\\
    Negative examples:\\
    - Director of DevOps,R\&D,54,a,53,16,63,17\\\
    - Product, Sr. Product Manager,27,9,16,4,18,10\\
    - Sr. Product Manager,Product,27,9,16,4,18,10,12,13\\
    Program:\\
    - [\textbackslash w \&.]+,(R\&D|Finance \& Operations|Product)(,\textbackslash d*)\{6\}\\
    \bottomrule
    \end{tabularx}
    \caption{Prompt for the \textsc{ChatGPT-Regex-Synth} II.}
    \label{fig:prompt_chatgptregexsynth_2}
\end{figure}
\begin{figure}
    \centering
    \scriptsize
    \begin{tabularx}{\linewidth}{|X|}
    \toprule
    \multicolumn{1}{|c|}{Prompt for Running \textsc{ChatGPT-Synth} I} \\
    \midrule
    ``` \\
    Semantic Regex Syntax:\\
    r ::= constant | cc\\
    \ \ \     | \{<type> -> f\} | \{<$\texttt{type}_b$> -> p\} | \{<$\texttt{type}_b$>\}\\
    \ \ \     | r? | r* | r+ | r\{n\} | r\{n1,n2\}\\
    \ \ \     | rr | r|r | r \& r\\
    f ::= x | toUpper | toLower\\
    \ \ \     | substring[number1, number2] | abbreviate[string]\\ 
    p ::= True | ~p | p|p | p\&p | NumMatch(number1, sym, number2, sym)\\ 
    \ \ \     | isYear(year1, year2) | isMonth(month1, month2) | isDate(date1, date2)\\
    \ \ \    | btwHour(n1, n2) | btwMin(n1, n2) | btwSec(n1, n2) | isMorning | isAfternoon | isEvening\\
    \ \ \     | inRegion(continent) | inCountry(country) | inState(state)\\
    cc ::= ANY | LET | NUM | CAP\\
    $\texttt{type}_b$ ::= Person | Organization | Product | Event | Work of Art\\
        \ \ \     | Number | Integer | Float\\
        \ \ \     | Date | Year | Month | Day\\
        \ \ \     | Time | Hour | Minute | Second\\
        \ \ \     | Place | Location | Nationality | Country | City\\
    ```\\
    Find a program using a semantic regular expression such that the program can match the positive examples and reject the negative examples.\\
    Positive examples: \\
    - (David J. Alexander), Marc Henri Sempere and Jocelyn Bulow \\
    - (Connie Wong), Sai Wong\\
    - (Amin Abughosh) and Joseph Abughosh and Abeer Elafifi \\
    Negative examples: \\
    - Connie Wong, Sai Wong\\
    - Amin Abughosh\\
    - Chilli House Inc.\\
    Program:\\
    - (\{<Person>\}) ((\&|and|,) \{<Person>\})+ \\
        \bottomrule
    \end{tabularx}
    \caption{Prompt for running \textsc{ChatGPT-Synth} I. }
    \label{fig:prompt_chatgptsynth_1}
\end{figure}

\begin{figure}
    \centering
    \scriptsize
    \begin{tabularx}{\linewidth}{|X|}
    \toprule
    \multicolumn{1}{|c|}{Prompt for running \textsc{ChatGPT-Synth} II} \\
    \midrule
    Find a program using a regular expression such that the program can match the positive examples and reject the negative examples.\\
    Positive examples: \\
    - Arugello Market Corp. \\
    - HollyFrontier Corporation \\
    - Iron Pan, Inc. \\
    Negative examples: \\
    - WONG JUDITH L \\
    - South Seattle\\
    - Brass Instrument Lubricants\\
    Program:\\
    - \{<Company>\} (, Inc|\{<Corporation>\})?(.)?\\\\
    Find a program using a regular expression such that the program can match the positive examples and reject the negative examples.\\
    Positive examples: \\
    - Bistro Burger Market Street \\
    - Coffeeshop - 3139 Mission \\
    - Crab Station at Fisherman's Wharf \\
    Negative examples:\\
    - 20th Century Cafe \\
    - ALL SEASON MARKET\\
    - AUTO CITY BRUSHLESS CAR WASH\\
    Program:\\
    - \{<Restaurant>\} ((-|at) )?\{<Location>\}\\\\
    Find a program using a regular expression such that the program can match the positive examples and reject the negative examples.\\
    Positive examples: \\
    - 15. Mugs \& Cups | Drinkware | Google Merchandise Store \\
    - 15. Bags | Google Merchandise Store \\
    - 10. Men's Outerwear | Apparel | Google Merchandise Store \\
    Negative examples:\\
    - 2. Women's T-Shirts | Apparel | Google Merchandise Store\\
    - 22. Water Bottles \& Tumblers | Drinkware | Google Merchandise Store\\
    - Google Women's Yoga Pants
    Program:\\
    - \{<Integer> -> NumMatch(10, <=, 20, <=)\}. \{<Product>\} [|] (\{<Category>\} [|])?Google Merchandise Store\\\\
    Find a program using a regular expression such that the program can match the positive examples and reject the negative examples.\\
    Positive examples: \\
     - Gift of Robert McBratney and Company|1929 \\
    - Gift of Minic Custom Woodwork, Inc. New York|1983 \\
    - Purchase, Edward C. Moore Jr. Gift|1923 \\
    Negative examples:\\
    - Fletcher Fund, 1941 \\
    - Gift of Emma and Jay A. Lewis|2004\\
    - The Michael C. Rockefeller Memorial Collection, Gift of Harry M. Miller Jr., and Professor Paulo de Goes, 1965\\
    Program:\\
    - (Purchase, )?\{<Gift>\}, \{<Location>\}?[|]\{<Date> -> InYear(0,2000)\}\\\\
    Find a program using a regular expression such that the program can match the positive examples and reject the negative examples.\\
    Positive examples:\\
     - 0.5 m (50 cm)\\
    - 1.55 kg (1550 g)\\
    - .5 cm (50 mm)\\
    Negative examples:\\
    - 0.6 m (60 cm) \\
    - 2.20 kg (2200 g)\\
    - .8 cm (80 mm)\\
    Program:\\
    - \{<Float>\}\&.*[5] \{<Unit>\} (\{<Float>\} \{<Unit>\})\\
    \bottomrule
    \end{tabularx}
    \caption{Propt for running \textsc{ChatGPT-Synth II.}}
    \label{fig:prompt_chatgptsynth_2}
\end{figure}

  \begin{figure}
    \centering
    \scriptsize
    \begin{tabularx}{\linewidth}{|X|}
    \toprule
    \multicolumn{1}{|c|}{Prompt for Running \textsc{ChatGPT-Synth} III} \\
    \midrule
     Find a program using a regular expression such that the program can match the positive examples and reject the negative examples.\\
     Positive examples:\\
     - 0.5 m, 50 cm\\
    - 0.05 m, 5 cm\\
    - 0.05 m, 0.5 cm\\
    Negative examples:\\
    - 0.6 m, 60 cm\\
    - 0.05 m (5 cm)\\
    - .8 cm, 80 mm\\
    Program:
    - \{<Float>\}\&.*[5] m, \{<Float>\} cm \\\\
     Find a program using a regular expression such that the program can match the positive examples and reject the negative examples.\\
     Positive examples:\\
     - Director of DevOps,R\&D,54,53,53,16,63,17\\
    - Head of People Ops,Finance \& Operations,,10,10,2,4,2\\
    - Sr. Product Manager,Product,27,9,16,4,18,10\\
    Negative examples:\\
    - Director of DevOps,R\&D,54,a,53,16,63,17\\\
    - Product, Sr. Product Manager,27,9,16,4,18,10\\
    - Sr. Product Manager,Product,27,9,16,4,18,10,12,13\\
    Program:\\
    - \{<Job>\},\{<Department>\}(,\{<Integer>\}?)\{6\}\\
    \bottomrule
    \end{tabularx}
    \caption{Prompt for the \textsc{ChatGPT-Synth} III.}
    \label{fig:prompt_chatgptsynth_3}
\end{figure}

\end{document}